\newcommand{\FR}[1]{{\textcolor{black}{#1}}}
\theoremstyle{plain}
\newtheorem{thm}{Theorem}
\newtheorem{lem}{Lemma}
\newtheorem{cor}{Corollary}
\newtheorem{assumption}{Assumption}
\theoremstyle{definition}
\newtheorem{defn}{Definition}
\theoremstyle{remark}
\newtheorem{rem}{Remark}
\DeclareMathOperator{\Es}{\mathcal{E}}
\DeclareMathOperator{\Sc}{\mathcal{S}}
\DeclareMathOperator{\Ir}{\mathcal{R}}
\DeclareMathOperator{\N}{\mathcal{N}}
\DeclareMathOperator{\Iv}{\mathcal{I}}
\DeclareMathOperator{\SP}{\text{SP}}
\DeclareMathOperator{\opt}{\text{opt}}
\DeclareMathOperator{\opx}{\text{opx}}
\DeclareMathOperator*{\esssup}{ess\,sup}
\title{Computing Essential Sets for Convex and Non-convex Scenario Problems: Theory and Application}
\author{Xinbo Geng, Le Xie, and M. Sadegh Modarresi
\thanks{X. Geng and L. Xie are with the Department
of Electrical and Computer Engineering, Texas A\&M University, College Station, TX, 77840. Email:
\{xbgeng, le.xie\}@tamu.edu. M.S. Modarressi is with Burns \& McDonnell, Houston, TX, 77027. Email: msmodarresi@burnsmcd.com.
This work is supported in part by Power Systems Engineering Research Center (PSERC), and in part by NSF OAC-1934675, ECCS-1839616 and CCF-1934904.}
}
\begin{document}
\maketitle

\begin{abstract}
\FR{The scenario approach is a general data-driven algorithm to chance-constrained optimization. It seeks the optimal solution that is feasible to a carefully chosen number of scenarios. A crucial step in the scenario approach is to compute the cardinality of essential sets, which is the smallest subset of scenarios that determine the optimal solution. This paper addresses the challenge of efficiently identifying essential sets. For convex problems, we demonstrate that the sparsest dual solution of the scenario problem could pinpoint the essential set. For non-convex problems, we show that two simple algorithms return the essential set when the scenario problem is non-degenerate. Finally, we illustrate the theoretical results and computational algorithms on security-constrained unit commitment (SCUC) in power systems. In particular, case studies  of chance-constrained SCUC are performed in the IEEE 118-bus system. Numerical results suggest that the scenario approach could be an attractive solution to practical power system applications.}
\end{abstract}

\begin{IEEEkeywords}
Chance constraint, scenario approach, convex optimization, non-convex, unit commitment.
\end{IEEEkeywords}

%
\IEEEpeerreviewmaketitle

\section{Introduction}
\label{sec:introduction}
\FR{
Decision making in the presence of uncertainties is an important problem in both theory and practice. \emph{Stochastic} optimization (SO) and \emph{robust} optimization (RO) are two common approaches for decision making under uncertainties. SO relies on probabilistic models to depict uncertainties and often optimizes the objective function in the presence of randomness \cite{birge_introduction_2011}. RO takes an alternative approach, in which the uncertainty model is typically set-based and deterministic \cite{bertsimas_theory_2011}.
This paper provides a perspective of decision making in uncertain environments through the lens of chance-constrained optimization (CCO), which is akin to both SO and RO \cite{geng_data-driven_2019-2}. The main distinction between CCO and SO/RO is the \emph{chance constraint} (see \eqref{form:cc_opt_function_cc} and \eqref{form:cc_opt_cc} in Section \ref{sec:introduction_to_the_scenario_approach}), which explicitly considers the feasibility of solutions under uncertainties.
}

\FR{
CCO has found successful applications in many different areas, e.g., control theory \cite{calafiore_scenario_2006}, chemical process \cite{henrion_stochastic_2001} and power systems \cite{modarresi_scenario-based_2018}.}
\FR{
Since the first chance-constrained program was formulated in 1950s \cite{charnes_cost_1958},
many classical methods were proposed to deal with specific families of distributions, e.g., multivariate Gaussian distribution \cite{kataoka_stochastic_1963}, log-concave distributions \cite{prekopa_stochastic_1995}. Some breakthroughs and novel methods appeared in the past ten years, including 
(1) sample average approximation (SAA) \cite{sen_relaxations_1992,ruszczynski_probabilistic_2002,ahmed_solving_2008,luedtke_integer_2010,luedtke_sample_2008,tanner_iis_2010,ahmed_relaxations_2018-1,ahmed_nonanticipative_2017};
(2) safe approximation \cite{nemirovski_convex_2006,ben-tal_robust_2009,chen_cvar_2010,nemirovski_safe_2012,bertsimas_data-driven_2018,bertsimas_probabilistic_2019}; 
and (3) the scenario approach \cite{calafiore_uncertain_2005,campi_exact_2008,campi_scenario_2009,calafiore_random_2010,campi_wait-and-judge_2016,campi_general_2018}. Some key results and recent progresses are summarized below, a more comprehensive review is in \cite{geng_data-driven_2019-2}.}

\FR{SAA is closely related with stochastic programming. \cite{sen_relaxations_1992} first introduced the idea of approximating the violation probability using samples. Based on this idea, \cite{ruszczynski_probabilistic_2002,ahmed_solving_2008,luedtke_integer_2010} show that CCO problems can be approximated by  mixed integer programs (MIPs). Many recent results have been developed on the computational and theoretical aspects of SAA. \cite{luedtke_integer_2010} proposes different strong MIP formulations for SAA, and various techniques to accelerate solving SAA have been proposed, e.g, \cite{luedtke_integer_2010,tanner_iis_2010}. Reference \cite{ahmed_relaxations_2018-1} studies the case of finite distributions and discusses existing approximation approaches for constructing good feasible solution. \cite{ahmed_nonanticipative_2017} proposes new Lagrangian dual based formulations that are proved to be superior than standard MIP formulations.}

\FR{An alternative approach to CCO is safe approximation (or inner approximation), whose solutions are guaranteed to satisfy the chance constraint. The safe approximation is often convex problems that are tractable to solve, e.g., \cite{nemirovski_convex_2006,nemirovski_safe_2012,ben-tal_robust_2009,chen_cvar_2010}. This line of work is closely related with robust optimization. For example, the safe approximations in \cite{ben-tal_robust_2009,chen_cvar_2010,nemirovski_safe_2012,bertsimas_data-driven_2018} can be written as RO problems using uncertainty sets. Reference \cite{bertsimas_probabilistic_2019} further shows that RO problems with carefully constructed uncertainty sets could return optimal solutions that satisfy chance constraints with high confidence. \cite{bienstock_chance-constrained_2014} derives a conic formulation to approximate a CCO problem with linear constraints, and applies a cutting plane algorithm to solve it. Another related work is \cite{ahmed_convex_2014}, which develops outer approximations to CCO problems and obtain lower bounds on the optimal objective value. 
}

\FR{The scenario approach (also known as scenario approximation) is another popular approach to solve CCO problems. The scenario approach seeks the optimal solution that is feasible to a carefully chosen number of scenarios (sample complexity). Reference \cite{calafiore_uncertain_2005} first proved a lower bound on sample complexity. Later \cite{campi_exact_2008,campi_scenario_2009,calafiore_random_2010} significantly improved the sample complexity bound and show that the bound is tight for a special class of CCO problems. The classical scenario approach in \cite{calafiore_uncertain_2005,campi_exact_2008,campi_scenario_2009,calafiore_random_2010} require all constraints to be convex, thus not applicable to non-convex problems. A recent breakthrough\cite{campi_general_2018} extends the scenario approach theory towards non-convex problems. For both convex and non-convex scenario approach, there is a critical step in common: estimating or computing the cardinality of essential sets, which is defined as the smallest subset of scenarios that determine the optimal solution (see Definitions \ref{defn:support_scenario} and \ref{def:essential_set}).
}

\FR{This paper addresses the challenge of efficiently identifying the cardinality of essential sets for both convex and non-convex scenario problems. The main contributions of this paper are threefold.}


\FR{From the theoretical perspective, we connect the convex scenario approach theory with recent progress under the non-convex setting, we reveal the connection and distinction among similar concepts defined in different contexts, i.e., support scenarios \cite{campi_exact_2008} and essential sets \cite{calafiore_random_2010} in the convex setting, irreducible and minimum support subsample \cite{campi_general_2018} in the non-convex setting. We also extend the classical results on the structure of convex scenario problems in \cite{calafiore_random_2010} towards non-convex problems.}

\FR{From the algorithmic perspective, we design algorithms that can efficiently identify essential sets. For convex problems, essential sets can be pinpointed using the sparsest dual solution (Theorem \ref{thm:support_scenario_multiplier}). For non-convex problems, we show that simple algorithms based on definitions return the essential set when the scenario problem is non-degenerate (Theorem \ref{thm:nondegenerate_unique_essential_irreducible}).}

\FR{From the application perspective, we provide a practical solution of chance-constrained Security-constrained Unit Commitment (c-SCUC) using the scenario approach. To the best of our knowledge, this paper is the first to apply the scenario approach on c-SCUC while providing theoretical guarantees on the feasibility of solution. Based on the special two-stage structure of scenario-based SCUC (s-SCUC), it is possible to design efficient algorithms to identify essential sets despite of the non-convexity of s-SCUC.}

The remainder of this paper is organized as follows. Section \ref{sec:introduction_to_the_scenario_approach} introduces the scenario approach for both convex and non-convex problems. Section \ref{sec:structural_properties_of_general_scenario_problems} studies the properties of essential sets in convex and non-convex scenario problems and proposes various algorithms to identify essential sets. Section \ref{sec:security_constrained_unit_commitment} formulates chance-constrained SCUC, which is solved via the scenario approach. Numerical results and discussions are in Section \ref{sec:case_study} and \ref{sec:discussions}, respectively. Section \ref{sec:concluding_remarks} presents the concluding remarks. All proofs are available at \cite{geng_computing_2019}.

The notations in this paper are standard.
All vectors are in the real field $\mathbf{R}$. We use $\mathbf{1}$ to represent an all-one vector of appropriate size. The transpose of a vector $a$ is $a^\intercal$.
The element-wise multiplication of the same-size vectors $a$ and $b$ is denoted by $a \circ b$. For instance, $[a_1;a_2] \circ [b_1;b_2] = [a_1 b_1; a_2 b_2]$.
Sets are in calligraphy fonts, e.g., $\mathcal{S}$. The cardinality of a set $\mathcal{S}$ is $|\mathcal{S}|$. Removal of element $i$ from set $\mathcal{N}$ is represented by $\mathcal{N}-i$. The essential supremum is $\esssup$.
\section{Introduction to the Scenario Approach} 
\label{sec:introduction_to_the_scenario_approach}

\subsection{Chance-constrained Optimization (CCO)} 
\label{sub:chance_constrained_optimization}
The typical formulation of CCO is in \eqref{form:cc_opt_function}.
\begin{subequations}
\label{form:cc_opt_function}
\begin{align}
  \min_x~ &  c^\intercal x \\
  \text{s.t.}~&  \mathbb{P}_{\xi} \Big( f(x,\xi) \le 0 \Big) \ge 1 - \epsilon \label{form:cc_opt_function_cc} \\
  & g(x) \le 0 \label{form:cc_opt_function_det}
\end{align}
\end{subequations}
We could write \eqref{form:cc_opt_function} in a more compact form by defining $\mathcal{X}_\xi:= \{x \in \mathbf{R}^n: f(x,\xi) \le 0\}$ and $\chi := \{x \in \mathbf{R}^n: g(x) \le 0 \}$
\begin{subequations}
\label{form:cc_opt}
\begin{align}
  \min_{x \in \chi}~ &  c^\intercal x  \\
  \text{s.t.}~&  \mathbb{P}_{\xi} \Big( x \in \mathcal{X}_\xi \Big) \ge 1 - \epsilon \label{form:cc_opt_cc}
\end{align}
\end{subequations}
Without loss of generality, we assume that the objective is a linear function of decision variables $x \in \mathbf{R}^d$ \cite{campi_scenario_2009}. Random vector $\xi \in \Xi$ denotes the source of uncertainties and $\Xi$ is the support of $\xi$. Deterministic constraints \eqref{form:cc_opt_function_det} are represented by set $\chi$ in \eqref{form:cc_opt}. Constraint \eqref{form:cc_opt_function_cc} or \eqref{form:cc_opt_cc} is the \emph{chance constraint}. The chance constraint \eqref{form:cc_opt_cc} requires the the inner constraint $x \in \mathcal{X}_\xi$ to be satisfied with probability at least $1 - \epsilon$, where the violation probability $\epsilon$ is typically a small number (e.g., $1\%, 5\%$). In \eqref{form:cc_opt_cc}, the set $\mathcal{X}_\xi$ depends on the realization of $\xi$ and the probability is taken with respect to $\xi$.

As discussed in Section \ref{sec:introduction}, many methods have been proposed to solve chance-constrained optimization problems. A detailed review and tutorial on chance-constrained optimization is in \cite{geng_data-driven_2019-2}. Compared with other methods, the scenario approach has many advantages such as computationally efficient and applicable for a broad range of optimization problems. 

\subsection{The Scenario Approach for Convex Problems} 
\label{sub:the_scenario_approach_for_convex_problems}
The scenario approach utilizes $N$ independent and identically distributed (i.i.d.) scenarios $\mathcal{N} := \{\xi^1,\xi^2,\cdots,\xi^N\}$ to \FR{approximate} the chance-constrained program \eqref{form:cc_opt_function} with the \emph{scenario problem} below:
\begin{subequations}
\label{form:scenario_problem_function}
\begin{align}
  \text{SP}(\mathcal{N}):~\min_{x} \quad &  c^\intercal x  \\
  \text{s.t. } & f(x,\xi^1) \le 0 \label{form:scenario_problem_scenario_function_1}  & :\mu^{1} \\ 
  & \qquad \vdots \nonumber \\
  & f(x,\xi^N) \le 0 \label{form:scenario_problem_scenario_function_N}  & :\mu^{N}  \\
  & g(x) \le 0 &: \lambda
  \end{align}
\end{subequations}
The scenario problem $\SP(\N)$ seeks the optimal solution $x_{\mathcal{N}}^*$ that is feasible for all $N$ scenarios. The Lagrangian multiplier associated with the $i$th scenario constraint $f(x,\xi^{i}) \le 0$ is denoted by $\mu^{i} \in \mathbf{R}^m$. We can write the scenario problem $\SP(\N)$ in a similar way with \eqref{form:cc_opt} by defining $\mathcal{X}_{i}:= \{x \in \mathbf{R}^n: f(x, \xi^i) \le 0\}$.
\begin{subequations}
\label{form:scenario_problem_set}
\begin{align}
  \text{SP}(\mathcal{N}):~\min_{x \in \chi}~&  c^\intercal x  \\
  \text{s.t.}~&  x \in \cap_{i=1}^N \mathcal{X}_{i} \label{form:scenario_problem_scenario}
\end{align}
\end{subequations}
 \begin{defn}[Violation Probability]
The \emph{violation probability} of a candidate solution $x^\diamond$ is defined as the probability that $x^\diamond$ is infeasible:
\begin{equation}
\mathbb{V}(x^\diamond) := \mathbb{P}_\xi\big( x^\diamond \notin \mathcal{X}_\xi \big).
\end{equation}
\end{defn}
The scenario approach theory aims at answering the following \emph{sample complexity} question: what is the smallest sample size $N$ such that $x_{\mathcal{N}}^*$ is feasible (i.e., $\mathbb{V}(x_\mathcal{N}^*) \le \epsilon$) to the original chance-constrained program \eqref{form:cc_opt}? Reference \cite{campi_exact_2008,calafiore_random_2010} provide in-depth analysis based on the concept of support scenarios.
\begin{defn}[Support Scenario \cite{campi_exact_2008,calafiore_random_2010}]
\label{defn:support_scenario}
Scenario $\xi^i$ is a \emph{support scenario} for the scenario problem $\SP(\N)$ if its removal changes the solution of $\SP(\N)$.
\end{defn}
Let $x_{\N}^*$ and $x_{\N-i}^*$ stand for the optimal solution to scenario problems $\SP(\N)$ and $\SP(\N-i)$, respectively. Then scenario $\xi^i$ is a support scenario if $c^\intercal x_{\N-i}^* < c^\intercal x_{\N}^*$. We use $\Sc(\N)$ ($\Sc$ in short) to represent the set of all support scenarios of $\SP(\N)$.
\begin{defn}[Non-degenerate Scenario Problem \cite{campi_exact_2008,calafiore_random_2010}]
\label{defn:degenerate-convex}
Let $x_{\mathcal{N}}^*$ and $x_{\mathcal{S}}^*$ be the optimal solutions to the scenario problems $\SP(\N)$ and $\SP(\Sc)$, respectively. The scenario problem $\SP(\N)$ is said to be \emph{non-degenerate}, if $c^\intercal x_{\mathcal{N}}^* = c^\intercal x_{\mathcal{S}}^*$.
\end{defn}
\begin{assumption}[Non-degeneracy\cite{campi_exact_2008,calafiore_random_2010}]
\label{ass:non-degeneracy}
For every $N$, the scenario problem $\SP(\N)$ is non-degenerate with probability $1$ with respect to scenarios $\N = \{\xi^{1},\xi^{2},\cdots,\xi^{N}\}$.
\end{assumption}

\begin{assumption}[Feasibility \cite{campi_exact_2008}]
\label{ass:feasibility_uniqueness}
Every scenario problem $\SP(\N)$ is feasible, and its feasibility region has a non-empty interior. The optimal solution $x_{\mathcal{N}}^\ast$ of $\SP(\N)$ exists.
\end{assumption}
\begin{defn}[Helly's Dimension \cite{calafiore_random_2010}]
Helly's dimension of the scenario problem $\SP(\N)$ is the smallest integer $h$ that 
$h \ge \esssup_{\N \subseteq \Xi^N} |\Sc(\N)|$ holds for any finite $N \ge 1$, where $|\Sc(\N)|$ is the number of support scenarios.
\end{defn}
Theorem \ref{thm:exact_feasibility_scenario_approach} presents one of the most important results in the scenario approach theory, which is based on the non-degeneracy and feasibility assumptions.
\begin{thm}[Exact Feasibility \cite{campi_exact_2008,calafiore_random_2010}]
\label{thm:exact_feasibility_scenario_approach}
Under Assumptions \ref{ass:non-degeneracy} (non-degeneracy) and \ref{ass:feasibility_uniqueness} (feasibility), let $x_{\N}^\ast$ be the optimal solution to the scenario problem $\SP(\N)$, it holds that
\begin{equation}
\label{eqn:exact_distribution_violation_prob}
  \mathbb{P}^N \Big( \mathbb{V}(x_{\N}^\ast)  > \epsilon \Big) \le \sum_{i=1}^{h-1} \binom{N}{i} \epsilon^i (1- \epsilon )^{N-i}.
\end{equation}
The probability $\mathbb{P}^N$ is taken with respect to $N$ random scenarios $\mathcal{N} = \{\xi^i\}_{i=1}^N$, and $h$ is the Helly's dimension of $\SP(\N)$.
\end{thm}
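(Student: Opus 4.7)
The plan is to adapt the classical Campi--Garatti argument: enumerate over possible support-scenario sets, factor the resulting probabilities using the i.i.d. structure, and then collapse the sum using exchangeability. Assumption \ref{ass:non-degeneracy} is what keeps the enumeration clean, and the Helly-dimension bound is what truncates it at $h$.

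First, I would decompose the violation event. Let $V_{\N} := \mathbb{V}(x_{\N}^\ast)$. By non-degeneracy, the support set $\Sc(\N)$ is almost surely uniquely defined, and by the definition of Helly's dimension we have $|\Sc(\N)| \le h$ with probability one. Hence
\[ \mathbb{P}^N\bigl(V_{\N} > \epsilon\bigr) = \sum_{k=0}^{h} \sum_{\substack{I \subseteq \{1,\ldots,N\} \\ |I|=k}} \mathbb{P}^N\bigl(V_{\N} > \epsilon,\ \Sc(\N)=I\bigr). \]
On the event $\{\Sc(\N)=I\}$, the non-degeneracy assumption guarantees $x_{\N}^\ast = x_I^\ast$, where $x_I^\ast$ denotes the optimizer of $\SP(I)$, so $V_{\N}$ coincides with $\mathbb{V}(x_I^\ast)$.

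Second, I would factor the inner probability using independence and exchangeability. A necessary condition for $\Sc(\N)=I$ is that every scenario $\xi^j$ with $j \notin I$ is satisfied by $x_I^\ast$. Conditioning on $\{\xi^i\}_{i \in I}$ (which determines $x_I^\ast$), the remaining $N-k$ scenarios are i.i.d.\ and independent of $x_I^\ast$, so the conditional probability that all of them lie in $\mathcal{X}_\xi \ni x_I^\ast$ equals $(1-\mathbb{V}(x_I^\ast))^{N-k}$. By exchangeability of i.i.d.\ scenarios, the $\binom{N}{k}$ terms in the inner sum over $|I|=k$ are identical in distribution, which collapses that sum into $\binom{N}{k}$ copies of a single expectation of the form $\mathbb{E}\bigl[\mathbb{1}_{\mathbb{V}(x_{[k]}^\ast) > \epsilon}\, (1-\mathbb{V}(x_{[k]}^\ast))^{N-k}\bigr]$, where $[k] := \{1,\ldots,k\}$.

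Third, I would convert this expectation into the binomial-tail form by a beta-type integration argument, comparing the (continuous) law of $\mathbb{V}(x_{[k]}^\ast)$ against a binomial tail so that each summand is bounded by $\binom{N}{k} \epsilon^k (1-\epsilon)^{N-k}$; summing across $k$ and using the truncation at Helly's dimension yields the claimed right-hand side. The main obstacle is precisely this last step: the crude bound $\mathbb{1}_{v>\epsilon}(1-v)^{N-k} \le (1-\epsilon)^{N-k}$ only produces $\sum_k \binom{N}{k}(1-\epsilon)^{N-k}$, which is far too loose, so one must exploit the identity (at the heart of \cite{campi_exact_2008,calafiore_random_2010}) that ties the law of $V_{\N}$ to a beta distribution and thereby to the binomial tail. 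Care is also needed to ensure the measure-theoretic decomposition in Step 1 is valid outside a probability-zero degenerate set, which is exactly where Assumption \ref{ass:non-degeneracy} (and, implicitly, Assumption \ref{ass:feasibility_uniqueness} to guarantee a well-defined $x_I^\ast$) enters.
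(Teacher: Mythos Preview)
The paper does not supply its own proof of Theorem~\ref{thm:exact_feasibility_scenario_approach}; the result is quoted from \cite{campi_exact_2008,calafiore_random_2010} and only the paper's new lemmas and corollaries are proved in Appendix~\ref{sub:proofs}. So there is nothing in the paper to compare your argument against line by line; one can only check your outline against the classical Campi--Garatti/Calafiore proof that the paper is citing.

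Your Steps~1--2 are the right opening moves and match the cited proofs: decompose over the support set, use non-degeneracy to identify $x_{\N}^\ast$ with $x_I^\ast$ on $\{\Sc(\N)=I\}$, relax $\{\Sc(\N)=I\}$ to the necessary condition ``every $j\notin I$ is satisfied by $x_I^\ast$'', and condition on $\{\xi^i\}_{i\in I}$ to pull out the factor $(1-\mathbb{V}(x_I^\ast))^{N-k}$. Exchangeability then collapses the inner sum to $\binom{N}{k}$ copies. All of this is exactly the skeleton of \cite{campi_exact_2008,calafiore_random_2010}.

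Step~3, however, does not go through as you describe it, and this is a genuine gap rather than a missing detail. Two issues. First, the per-summand bound you aim for, $\binom{N}{k}\,\mathbb{E}\bigl[\mathbb{1}_{\mathbb{V}(x_{[k]}^\ast)>\epsilon}(1-\mathbb{V}(x_{[k]}^\ast))^{N-k}\bigr]\le \binom{N}{k}\epsilon^k(1-\epsilon)^{N-k}$, is false in general: already in the fully-supported case $|\Sc|=h$ a.s., the single surviving term equals the \emph{entire} binomial tail $\sum_{i=0}^{h-1}\binom{N}{i}\epsilon^i(1-\epsilon)^{N-i}$, not the single term $\binom{N}{h}\epsilon^h(1-\epsilon)^{N-h}$. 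Second, and more structurally, when the problem is merely non-degenerate (not fully supported) your decomposition produces several $k$-terms, and summing the crude upper bounds over $k$ overcounts badly. The cited proofs avoid this by one of two devices you do not invoke: either (i) in the fully-supported case, the relaxed events in Step~2 actually \emph{partition} the sample space (up to null sets), so one gets the moment identity $\binom{N}{h}\,\mathbb{E}[(1-\mathbb{V}(x_{[h]}^\ast))^{N-h}]=1$ for every $N\ge h$, which pins down the law of $\mathbb{V}(x_{[h]}^\ast)$ and yields the incomplete-Beta/binomial-tail identity exactly; or (ii) in the general non-degenerate case, one lifts/regularizes to a fully-supported problem with the same Helly dimension and the same (or larger) violation probability, applies (i), and passes to the limit. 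Your ``beta-type integration argument, comparing the law of $\mathbb{V}(x_{[k]}^\ast)$ against a binomial tail'' gestures at (i) but cannot be carried out term-by-term across $k$; the partition/equality step, or the reduction to it, is the missing idea.
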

Stronger results without the feasibility assumption are in \cite{calafiore_random_2010,campi_exact_2008}.
Based on Theorem \ref{thm:exact_feasibility_scenario_approach}, the scenario approach answers the sample complexity question in Corollary \ref{cor:prior_sample_complexity_fully_supported}.
\begin{cor}[Sample Complexity \cite{campi_exact_2008,calafiore_random_2010}]
\label{cor:prior_sample_complexity_fully_supported}
Under Assumptions \ref{ass:non-degeneracy} (non-degeneracy) and \ref{ass:feasibility_uniqueness} (feasibility), given a violation probability $\epsilon \in (0,1)$ and a confidence parameter $\beta \in (0,1)$, if we choose the smallest number of scenarios $N$ such that 
\begin{equation}
  \sum_{i=0}^{h-1} \binom{N}{i} \epsilon^i (1- \epsilon)^{N-i} \le \beta,
\end{equation}
then it holds that
\begin{equation}
  \mathbb{P}^N\Big ( \mathbb{V}(x_{\mathcal{N}}^\ast) \le \epsilon \Big) \ge 1 - \beta,
\end{equation}  
where $x_{\mathcal{N}}^\ast$ is the optimal solution to $\SP(\N)$, and $h$ is the Helly's dimension of $\SP(\N)$ ($0 \le h \le N$). 
\end{cor}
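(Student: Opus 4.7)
The plan is to derive Corollary \ref{cor:prior_sample_complexity_fully_supported} as a direct consequence of Theorem \ref{thm:exact_feasibility_scenario_approach}, with only elementary complementation and monotonicity arguments. No new probabilistic machinery is required; the deep content is already encapsulated in the tail bound \eqref{eqn:exact_distribution_violation_prob}.

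First, I would invoke Theorem \ref{thm:exact_feasibility_scenario_approach} under the standing Assumptions \ref{ass:non-degeneracy} and \ref{ass:feasibility_uniqueness} to assert the exact-feasibility inequality
\begin{equation*}
\mathbb{P}^N\bigl( \mathbb{V}(x_{\N}^\ast) > \epsilon \bigr) \;\le\; \sum_{i=0}^{h-1} \binom{N}{i}\epsilon^i (1-\epsilon)^{N-i},
\end{equation*}
where including the $i=0$ term only enlarges the right-hand side by the non-negative quantity $(1-\epsilon)^N$, which is harmless for an upper bound. Next, I would take complementary events and write
\begin{equation*}
\mathbb{P}^N\bigl(\mathbb{V}(x_{\N}^\ast) \le \epsilon\bigr) \;=\; 1 - \mathbb{P}^N\bigl(\mathbb{V}(x_{\N}^\ast) > \epsilon\bigr) \;\ge\; 1 - \sum_{i=0}^{h-1}\binom{N}{i}\epsilon^i(1-\epsilon)^{N-i}.
\end{equation*}

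Then I would apply the hypothesis of the corollary, namely that $N$ is chosen so that $\sum_{i=0}^{h-1}\binom{N}{i}\epsilon^i(1-\epsilon)^{N-i} \le \beta$, to conclude $\mathbb{P}^N(\mathbb{V}(x_{\N}^\ast)\le \epsilon) \ge 1 - \beta$, as desired. A minor technical point I would flag is well-posedness of the choice of $N$: the left-hand side $\sum_{i=0}^{h-1}\binom{N}{i}\epsilon^i(1-\epsilon)^{N-i}$ is the tail of a Binomial$(N,\epsilon)$ CDF up to $h-1$, which tends to $0$ as $N \to \infty$ for fixed $h$ and $\epsilon \in (0,1)$; hence a smallest feasible $N$ exists for any $\beta \in (0,1)$.

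There is essentially no hard step here, which is the point: the scenario approach localizes the entire difficulty into proving Theorem \ref{thm:exact_feasibility_scenario_approach}, so that a user-facing sample-complexity statement drops out by a one-line complementation. The only place one has to be slightly careful is in the index of summation, ensuring consistency with Theorem \ref{thm:exact_feasibility_scenario_approach} and noting that extending the sum to start at $i=0$ preserves validity of the bound while matching the form stated in the corollary.
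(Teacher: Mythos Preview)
Your proposal is correct and matches the paper's treatment: the paper does not give an explicit proof of Corollary \ref{cor:prior_sample_complexity_fully_supported} but presents it as an immediate consequence of Theorem \ref{thm:exact_feasibility_scenario_approach}, which is precisely the complementation argument you carry out. Your observation about the summation index (extending from $i=1$ in Theorem \ref{thm:exact_feasibility_scenario_approach} to $i=0$ in the corollary only weakens the bound) and the remark on well-posedness of the smallest $N$ are both sound and appropriate.
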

The scenario approach is essentially a randomized algorithm to find a feasible solution to chance-constrained optimization problems. The randomness of the scenario approach comes from drawing i.i.d. scenarios. The confidence parameter $\beta$ quantifies the risk of failure due to drawing scenarios from a ``bad'' set. Corollary \ref{cor:prior_sample_complexity_fully_supported} shows that by choosing a proper number of scenarios, the corresponding optimal solution $x_{\mathcal{N}}^*$ is feasible (i.e., $\mathbb{V}(x_{\N}^*) \le \epsilon$) with confidence at least $1 - \beta$.
\begin{assumption}[Convexity]
\label{ass:convexity}
The deterministic constraint $g(x) \le 0$ is convex, and the random constraint $f(x,\xi)$ is convex in $x$ for every instance of $\xi$. In other words, the sets $\chi$ and $\mathcal{X}_i$s in \eqref{form:scenario_problem_set} are convex.
\end{assumption}
\begin{thm}[\cite{calafiore_uncertain_2005,calafiore_random_2010}]
\label{thm:helly_dimension_convex}
Under Assumption \ref{ass:feasibility_uniqueness} and \ref{ass:convexity}, 
the number of support scenarios $|\Sc|$ for $\SP(\N)$ is at most $n$. 
In other words, $h \le n$, where $n$ is the number of decision variables $x \in \mathbf{R}^n$ and $h$ is Helly's dimension. 
\end{thm}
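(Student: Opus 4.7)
The plan is to bound $|\mathcal{S}|$ by combining the KKT optimality conditions for convex programs with Carath\'eodory's theorem in $\mathbb{R}^n$. The key insight is that every support scenario must contribute a strictly positive Lagrange multiplier in the KKT decomposition at $x^*_\mathcal{N}$, and Carath\'eodory's theorem caps the number of positive multipliers needed to represent $-c$ at $n$.

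First, Assumptions~\ref{ass:convexity} and \ref{ass:feasibility_uniqueness} together make $\text{SP}(\mathcal{N})$ a convex program whose feasible region has non-empty interior; this is Slater's condition, so the KKT conditions are necessary at the optimum $x^*_\mathcal{N}$. Concretely, there exist non-negative multipliers $\lambda$ and $\mu^{1},\ldots,\mu^{N}$ satisfying
\begin{equation*}
  -c \;\in\; \lambda\,\partial g(x^*_\mathcal{N}) + \sum_{i=1}^{N} \mu^{i}\,\partial_x f(x^*_\mathcal{N},\xi^{i}),
\end{equation*}
with complementary slackness forcing $\mu^{i}=0$ on scenarios inactive at $x^*_\mathcal{N}$. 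Second, I would argue that every $\xi^{i}\in\mathcal{S}$ must carry $\mu^{i}>0$ in every valid KKT decomposition: if some decomposition had $\mu^{i}=0$, then the very same decomposition would also certify the optimality of $x^*_\mathcal{N}$ for the relaxed problem $\SP(\N-i)$ with unchanged value $c^\intercal x^*_\mathcal{N}$, contradicting Definition~\ref{defn:support_scenario}, which requires the removal of a support scenario to strictly decrease the optimal value. Third, Carath\'eodory's theorem applied in $\mathbb{R}^n$ guarantees that any vector in the conic hull of the active subgradients, in particular $-c$, can be written as a non-negative combination of at most $n$ such generators; picking such a minimal representation yields a KKT decomposition in which at most $n$ multipliers are non-zero. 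By the previous step, every support index must appear among these at most $n$ non-zero indices, and therefore $|\mathcal{S}|\le n$, which is exactly the claimed bound $h\le n$.

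The main obstacle is the careful handling of the non-smooth convex setting: since $f(\cdot,\xi)$ and $g$ are only assumed convex and not necessarily differentiable, the KKT equation must be interpreted via selections from the subdifferentials, and Carath\'eodory must be applied to the convex cone generated by those selections. A second delicate point is the implication \emph{``$\xi^{i}\in\mathcal{S}\Rightarrow \mu^{i}>0$ in every valid KKT decomposition''} in the absence of Assumption~\ref{ass:non-degeneracy}; fortunately only this forward direction is required, and it reduces precisely to the contradiction argument sketched above (the converse could fail under degeneracy, but is not needed). Once these technicalities are in place, the Carath\'eodory-based counting of positive multipliers yields the sharp bound $|\mathcal{S}|\le n$, as stated.
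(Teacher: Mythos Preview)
The paper does not actually prove this theorem; it is quoted from \cite{calafiore_uncertain_2005,calafiore_random_2010}. The classical argument in those references is purely primal-geometric and rests on Helly's theorem in $\mathbf{R}^n$ (hence the name ``Helly's dimension''): assuming $n{+}1$ support scenarios $i_1,\dots,i_{n+1}$ exist, one forms the $n{+}2$ convex sets $\mathcal{X}_{i_1},\dots,\mathcal{X}_{i_{n+1}}$ together with $\chi\cap\bigl(\bigcap_{i\notin\{i_1,\dots,i_{n+1}\}}\mathcal{X}_i\bigr)\cap\{x:c^\intercal x<c^\intercal x^*_{\mathcal{N}}\}$, verifies that any $n{+}1$ of them intersect (witnessed by $x^*_{\mathcal{N}}$ when the last set is omitted, and by $x^*_{\mathcal{N}-i_j}$ otherwise), and then invokes Helly's theorem to produce a feasible point with strictly smaller cost---contradicting optimality of $x^*_{\mathcal{N}}$.

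Your KKT-plus-Carath\'eodory route is correct and is, in a sense, the dual of the Helly argument. It carries the minor cost of needing Slater/KKT machinery (which the geometric proof avoids entirely), but it dovetails nicely with the paper's own Lemma~\ref{lem:support_scenario_multiplier}, whose content is precisely your implication ``$\xi^i\in\mathcal{S}\Rightarrow\|\mu^{i,*}\|>0$ in some optimal dual.'' One point worth making explicit in your write-up: after the Carath\'eodory reduction you must check that the sparsified multiplier vector is still a valid KKT certificate, in particular that complementary slackness survives. It does---Carath\'eodory selects its at most $n$ generators from among those already carrying positive weight, hence from among active constraints---but this deserves a sentence.
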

For convex scenario problems $\SP(\N)$, we could replace $h$ by $n$ in Theorem \ref{thm:exact_feasibility_scenario_approach} and Corollary \ref{cor:prior_sample_complexity_fully_supported}. This leads to the classical results of the scenario approach in \cite{calafiore_uncertain_2005,campi_exact_2008,campi_scenario_2009,calafiore_random_2010}.
\begin{rem}[Towards Non-convexity]
Theorem \ref{thm:exact_feasibility_scenario_approach} and Corollary \ref{cor:prior_sample_complexity_fully_supported} do not assume convexity of $f(x,\xi)$ and $g(x)$. In theory, Theorem \ref{thm:exact_feasibility_scenario_approach} and Corollary \ref{cor:prior_sample_complexity_fully_supported} are applicable for non-convex scenario problems if a feasible non-convex $\SP(\N)$ is proved to be non-degenerate with probability $1$ (e.g., \cite{geng_chance-constrained_2019}). In practice, however, the scenario approach was considered \emph{not applicable} for non-convex problems. Comprehensive analysis are presented in Section \ref{sub:the_scenario_approach_for_non_convex_problems}.
\end{rem}

\subsection{The Scenario Approach for Non-convex Problems} 
\label{sub:the_scenario_approach_for_non_convex_problems}
The scenario approach was considered \emph{not applicable} for non-convex problems for the following three reasons: (1) non-convexity causes degeneracy; (2) non-trivial bounds on $|\Sc|$ may not exist for non-convex
$\SP(\N)$; and (3) it is computationally intractable to find optimal solutions.

First, degeneracy is a common issue for non-convex problems, e.g., the scenario-based SCUC problem in Section \ref{sub:s_scuc_is_degenerate}. 
Since the non-degeneracy assumption \ref{ass:non-degeneracy} lies at the heart of the scenario approach theory, almost all results in the literature are for non-degenerate problems.

Second, it is almost impossible to prove non-trivial and practical bounds on the number of support scenarios $|\Sc|$ for non-convex problems.
Reference \cite{campi_general_2018} presents one extreme case, in which every scenario is a support scenario thus $|\Sc|=N$ \footnote{Using the trivial bound $|\Sc| \le N$, Theorems \ref{thm:exact_feasibility_scenario_approach} and \ref{thm:scenario_theory_nonconvex_posterior} provide guarantees $\mathbb{P}( \mathbb{V}(x^*_{\N}) > \epsilon)\le 1$, which is useless.}. In addition, a loose bound typically leads to an astronomical sample complexity $N$, which make the scenario approach unpractical. For instance, loose bounds on $|\Sc|$ for scenario-based unit commitment will require $10^3 \sim 10^4$ times more scenarios than necessary \cite{geng_chance-constrained_2019}.

Furthermore, the most attractive feature of convex optimization is that any \emph{local} minimum is a \emph{global} minimum. And there exist a broad family of efficient algorithms that compute global optimal solutions for convex problems. Hence, $x_{\N}^*$ in Section \ref{sub:the_scenario_approach_for_convex_problems} refers to the \emph{global} optimal solution by default.
It is worth noting that $x_{\N}^*$ is solely determined by the scenario problem $\SP(\N)$ and it is \emph{not} algorithm-dependent.

For non-convex problems, however, it is often computationally intractable to find global optimal solutions. There are many algorithms that are capable of finding \emph{local} optimal solutions in a relatively short time. Therefore, it is more reasonable and practical to analyze the characteristics of \emph{local} solutions for non-convex scenario problems. 
Algorithm  $\mathbb{A}: \Xi^N \rightarrow \mathbf{R}^n$ stands for the process of finding solutions to $\SP(\N)$, e.g., primal-dual interior-point method.
We use $\opx_\mathbb{A}(\N)$ to represent a (possibly suboptimal) solution to $\SP(\N)$ obtained via algorithm $\mathbb{A}$. The corresponding optimal objective value is denoted by $\opt_\mathbb{A}(\N)$. The subscript $\mathbb{A}$ emphasizes the fact that the solution is algorithm-dependent. And we use $\SP_\mathbb{A}(\N)$ to represent a scenario problem solved by algorithm $\mathbb{A}$.

Consequently, the scenario approach was considered \emph{not applicable} for non-convex problems until very recently. By removing the non-degeneracy assumption and analyzing any feasible solutions of non-convex scenario problems, reference \cite{campi_general_2018} develops a general theory for the scenario approach. This subsection summarizes its key results.

Identical to the convex case in Section \ref{sub:the_scenario_approach_for_convex_problems}, the scenario approach \FR{approximates} \eqref{form:cc_opt} to the scenario problem \eqref{form:scenario_problem_set} using $N$ scenarios $\mathcal{N} = \{\xi^1,\xi^2,\cdots, \xi^N\}$ for non-convex problems. The sets $\chi$ and $\mathcal{X}_\xi$ here could be non-convex.
\begin{defn}[Invariant Set]
\label{def:invariant_set}
Let $\opt_\mathbb{A}(\mathcal{M})$ be the optimal value of $\SP(\mathcal{M})$ found by algorithm $\mathbb{A}$ for a scenario problem $\SP(\mathcal{M})$.
A set of scenarios $\mathcal{I}$ is an invariant (scenario) set for $\SP_\mathbb{A}(\N)$ if $\opt_\mathbb{A}(\mathcal{I}) = \opt_\mathbb{A}(\N)$.
\end{defn}
The concept of invariant set is an extension of support scenarios for (possibly degenerate) non-convex scenario problems. A trivial invariant set is $\Iv = \N$. Algorithm $\mathbb{B}: \Xi^N \rightarrow \Iv$ represents the process of finding non-trivial invariant sets. Examples of Algorithm $\mathbb{B}$ can be found in Section \ref{sec:structural_properties_of_general_scenario_problems} and Appendix \ref{sec:algorithms}.

\begin{thm}[Posterior Guarantees for Non-convex Scenario Problems \cite{campi_general_2018}\footnote{Theorem \ref{thm:scenario_theory_nonconvex_posterior} is a simplified version of the main result in \cite{campi_general_2018}, the feasibility assumption \ref{ass:feasibility_uniqueness} is a simplified version of the admissible assumption in \cite{campi_general_2018}.}]
\label{thm:scenario_theory_nonconvex_posterior}
Suppose Assumption \ref{ass:feasibility_uniqueness} (feasibility) holds true and $\beta \in (0,1)$ is given. Algorithm $\mathbb{A}$ solves the scenario problem $\SP(\N)$ and obtains an optimal solution $\opx_\mathbb{A}(\N)$. Algorithm $\mathbb{B}$ finds an invariant set $\mathcal{I}$ of cardinality $|\Iv|$. The following probabilistic guarantee holds
\begin{equation*}
  \mathbb{P}^N \Big( \mathbb{V}\big(\opx_\mathbb{A}(\N)\big) \le \epsilon(N,|\Iv|,\beta) \Big) \ge 1-\beta,
\end{equation*}
where the function $\epsilon(k,N,\beta)$ is defined as
\begin{equation}
\label{eqn:specified_epsilon_function}
\epsilon(N,k,\beta) := 
\begin{cases}
  1 & \text{if } k = N,\\
  1 - \Big( \frac{\beta}{ N \binom{N}{k}} \Big)^{\frac{1}{N-k}} & \text{otherwise.}
\end{cases}
\end{equation}
\end{thm}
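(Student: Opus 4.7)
The plan is to mimic the union bound argument of Campi--Garatti by partitioning the bad event according to the size and identity of the invariant set returned by algorithm $\mathbb{B}$. For each integer $k \in \{0,1,\dots,N-1\}$ and each index set $I \subseteq \{1,\dots,N\}$ with $|I|=k$, write $\xi_I := (\xi^i)_{i \in I}$, let $E_I$ be the event that $\mathbb{B}$ returns precisely the scenarios indexed by $I$, and abbreviate $\epsilon_k := \epsilon(N,k,\beta)$. The case $k=N$ contributes nothing because $\{\mathbb{V}(\opx_\mathbb{A}(\mathcal{N})) > 1\} = \emptyset$. By the union bound,
\begin{equation*}
\mathbb{P}^N\bigl(\mathbb{V}(\opx_\mathbb{A}(\mathcal{N})) > \epsilon_{|\Iv|}\bigr) \;\le\; \sum_{k=0}^{N-1} \sum_{|I|=k} \mathbb{P}^N\bigl(E_I \cap \{\mathbb{V}(\opx_\mathbb{A}(\mathcal{N})) > \epsilon_k\}\bigr),
\end{equation*}
so it suffices to bound each inner term by $(1-\epsilon_k)^{N-k}$; after summing, the choice of $\epsilon_k$ in \eqref{eqn:specified_epsilon_function} makes the total equal to $\sum_{k=0}^{N-1} \binom{N}{k}\frac{\beta}{N\binom{N}{k}} = \beta$.

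The core of the argument is the per-subset bound. On the event $E_I$, the invariant set property forces $\opt_\mathbb{A}(\xi_I) = \opt_\mathbb{A}(\mathcal{N})$, and (after fixing a measurable tie-breaking rule for $\mathbb{A}$) the solution $\opx_\mathbb{A}(\mathcal{N})$ agrees with $\opx_\mathbb{A}(\xi_I)$, a random element measurable with respect to $\xi_I$ alone. Moreover, $E_I$ is contained in the event that every remaining scenario $\xi^j$, $j \notin I$, satisfies $\opx_\mathbb{A}(\xi_I) \in \mathcal{X}_{\xi^j}$ (otherwise $\opx_\mathbb{A}(\xi_I)$ would not even be feasible for $\SP(\mathcal{N})$). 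Conditioning on $\xi_I$ and using independence and identical distribution of the remaining $N-k$ scenarios,
\begin{equation*}
\mathbb{P}^N\bigl(E_I \cap \{\mathbb{V}(\opx_\mathbb{A}(\mathcal{N})) > \epsilon_k\} \,\big|\, \xi_I\bigr) \;\le\; \bigl(1 - \mathbb{V}(\opx_\mathbb{A}(\xi_I))\bigr)^{N-k}\, \mathbf{1}\bigl[\mathbb{V}(\opx_\mathbb{A}(\xi_I)) > \epsilon_k\bigr] \;\le\; (1-\epsilon_k)^{N-k}.
\end{equation*}
Taking expectations over $\xi_I$ and summing over the $\binom{N}{k}$ index sets of size $k$ yields the per-$k$ bound $\binom{N}{k}(1-\epsilon_k)^{N-k}$, which plugs into the outer union bound above.

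The main obstacle I expect lies not in the counting but in justifying the step ``$\opx_\mathbb{A}(\mathcal{N}) = \opx_\mathbb{A}(\xi_I)$ on $E_I$.'' Definition \ref{def:invariant_set} only equates optimal values, so without extra care the argument controls the violation probability of some solution with the same objective value, not of $\opx_\mathbb{A}(\mathcal{N})$ itself. I would address this by assuming (or constructing, via a lexicographic tie-break) a deterministic, permutation-invariant selection rule so that $\mathbb{A}$ induces a measurable map from scenario multisets to solutions, together with a consistency property that whenever a subset $\mathcal{I}\subseteq\mathcal{N}$ achieves $\opt_\mathbb{A}(\mathcal{I})=\opt_\mathbb{A}(\mathcal{N})$ one has $\opx_\mathbb{A}(\mathcal{I})=\opx_\mathbb{A}(\mathcal{N})$. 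Secondary care is needed to ensure that the events $E_I$ are measurable and that the algorithm $\mathbb{B}$ returns an invariant set deterministically, so that the conditioning on $\xi_I$ in the bound above is legitimate.

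Once these measurability and consistency points are nailed down, the remainder is algebraic: verifying that $\epsilon(N,k,\beta)$ defined in \eqref{eqn:specified_epsilon_function} satisfies $\binom{N}{k}(1-\epsilon_k)^{N-k} = \beta/N$, and observing that summing $\beta/N$ over the $N$ values $k=0,1,\dots,N-1$ produces exactly $\beta$. Taking complements then yields $\mathbb{P}^N(\mathbb{V}(\opx_\mathbb{A}(\mathcal{N})) \le \epsilon(N,|\Iv|,\beta)) \ge 1 - \beta$, as claimed.
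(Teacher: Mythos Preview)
The paper does not supply its own proof of this theorem: it is stated as a simplified citation of the main result in \cite{campi_general_2018}, and no argument for it appears in the appendix. Your proposal reconstructs precisely the union-bound/conditioning argument of Campi--Garatti, and the computation is correct; you also correctly flag the one genuine technical issue, namely that Definition~\ref{def:invariant_set} equates only optimal \emph{values}, so a consistency/tie-breaking hypothesis on the pair $(\mathbb{A},\mathbb{B})$ is needed to identify $\opx_\mathbb{A}(\mathcal{N})$ with $\opx_\mathbb{A}(\xi_I)$ on $E_I$. In the original reference this is handled by imposing exactly such a consistency assumption on the decision map, so your resolution matches the source.
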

\begin{lem}
\label{lem:monotonicity_specified_epsilon_function}
The $\epsilon(N,k,\beta)$ function defined in \eqref{eqn:specified_epsilon_function} has the following properties:
(1) $\epsilon(N,k,\beta)$ is monotonically decreasing in $\beta$;
(2) $\epsilon(N,k,\beta)$ is monotonically increasing in $k$;
(3) $\epsilon(N,k,\beta)$ is monotonically decreasing in $N$.
\end{lem}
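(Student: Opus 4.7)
The plan is to read each monotonicity statement directly off the closed-form definition $\epsilon(N,k,\beta) = 1 - \bigl(\beta/(N\binom{N}{k})\bigr)^{1/(N-k)}$ for $k < N$ (the case $k = N$ is a constant $1$ and is handled separately in each part).

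Property (1) is immediate. The exponent $1/(N-k)$ is strictly positive, so $t \mapsto t^{1/(N-k)}$ is strictly increasing on $(0,\infty)$; composing with the linear, increasing map $\beta \mapsto \beta/(N\binom{N}{k})$ and subtracting from $1$ reverses the direction, yielding strict decrease in $\beta$.

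For property (2), I would attack the equivalent inequality $\bigl(\beta/(N\binom{N}{k})\bigr)^{1/(N-k)} \ge \bigl(\beta/(N\binom{N}{k+1})\bigr)^{1/(N-k-1)}$. Raising both sides to the positive power $(N-k)(N-k-1)$ and using the Pascal identity $\binom{N}{k+1}/\binom{N}{k} = (N-k)/(k+1)$ collapses the claim to $N\binom{N}{k+1}\bigl((N-k)/(k+1)\bigr)^{N-k-1} \ge \beta$, and since $\beta < 1$ it suffices to show the left side is at least $1$. The key tool is the elementary lower bound $\binom{N}{k+1} = \binom{N}{N-k-1} \ge \bigl((k+2)/(N-k-1)\bigr)^{N-k-1}$ (valid for $k \le N-2$), obtained by writing the binomial coefficient as a product $\prod_{i=1}^{N-k-1}(k+1+i)/i$ and bounding each factor below by $(k+2)/(N-k-1)$. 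Substituting and using the algebraic identity $(k+2)(N-k) = (k+1)(N-k-1) + (N+1)$ reduces the left side to $N\bigl(1 + (N+1)/((k+1)(N-k-1))\bigr)^{N-k-1}$, which is clearly $\ge N \ge 1$. The boundary case $k = N-1$ is trivial because $\epsilon(N,N,\beta) = 1$ dominates any value strictly less than $1$.

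Part (3) follows the same blueprint but is the most delicate step, and is where I expect the main obstacle. Writing $a = 1 - \epsilon(N,k,\beta)$ and $b = 1 - \epsilon(N+1,k,\beta)$, I would use $a^{N-k} = \beta/(N\binom{N}{k})$, $b^{N+1-k} = \beta/((N+1)\binom{N+1}{k})$, together with $\binom{N+1}{k}/\binom{N}{k} = (N+1)/(N+1-k)$, to rewrite the desired inequality $b \ge a$ as the single condition $N\binom{N}{k}\bigl(N(N+1-k)/(N+1)^2\bigr)^{N-k} \ge \beta$. Unlike in part (2), the right-hand factor is not a pure binomial ratio: it carries an extra $(N/(N+1))^{2(N-k)}$ term that needs to be absorbed, and a naive application of the binomial estimate of part (2) is too lossy once $N - k$ is small. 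I would handle this by combining the part-(2) bound on $\binom{N}{k}$ with the classical estimate $(1 + 1/N)^N \le e$ to control the extra geometric factor, and dispose of small-$N$ corners by direct calculation. An alternative I would pursue in parallel is to treat $N$ as a continuous variable and prove $\partial_N \epsilon \le 0$ from the implicit relation $(N-k)\log(1-\epsilon) + \log N + \log\binom{N}{k} = \log\beta$, which turns the question into a digamma-function inequality of the form $\log(1-\epsilon) + 1/N + \psi(N+1) - \psi(N-k+1) \le 0$; whichever route gives cleaner bookkeeping of the logarithmic residuals will be the one I adopt.
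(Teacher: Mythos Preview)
Your reductions for all three parts coincide with the paper's: each monotonicity claim is turned into the positivity of a single logarithmic expression built from $1-\epsilon$. Part (1) is dismissed as obvious in both. For part (2) the paper takes logs and splits into the cases $N \ge 2k$ and $N < 2k$, invoking the standard bound $\binom{N}{k} \ge (N/k)^k$ only in the second case; your route via the tailored estimate $\binom{N}{k+1} \ge \bigl((k+2)/(N-k-1)\bigr)^{N-k-1}$ together with the identity $(k+2)(N-k) = (k+1)(N-k-1) + (N+1)$ is a valid alternative that avoids the case distinction and is arguably cleaner.

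For part (3) you and the paper arrive at the identical target inequality $N\binom{N}{k}\bigl(N(N+1-k)/(N+1)^2\bigr)^{N-k} \ge \beta$, but here your completion is only a sketch, and this is the one place where a concrete idea is still missing. The paper does not pursue either of your two suggested routes. Instead, after handling $N=k$ and $N=k+1$ directly, it rewrites the log of the left side as
\[
\ln(1/\beta) + \ln N + (N-k)\ln\!\Bigl(\bigl(1-\tfrac{1}{N+1}\bigr)\bigl(1-\tfrac{k}{N+1}\bigr)\Bigr) + \ln\binom{N}{k},
\]
observes that $\ln N$, $\ln\binom{N}{k}$, and $\ln\!\bigl((1-\tfrac{1}{N+1})(1-\tfrac{k}{N+1})\bigr)$ are each increasing in $N$, and uses this to obtain the uniform lower bound $\ln(k/\beta) > 0$. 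This monotonicity-in-$N$ device is the step you have not identified; your proposed combination of a binomial lower bound with $(1+1/N)^N \le e$ is lossier and, as you already anticipate, would leave residual factors to be cleaned up by hand for small $N-k$, while the digamma approach is more machinery than the problem requires.
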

In order to achieve an $\epsilon$-level solution with confidence $1-\beta$, Lemma \ref{lem:monotonicity_specified_epsilon_function} shows that the least conservative result (i.e., smallest sample complexity $N$) is achieved with the invariant set of minimal cardinality, which is defined as an essential set.
\begin{defn}[Essential Set \cite{calafiore_random_2010}]
\label{def:essential_set}
A set of scenarios $\Es \subseteq \N$ is an \emph{essential} (scenario) set for $\SP_\mathbb{A}(\N)$ if 
\begin{equation}
  \Es := \arg \min \{|\Es|: \opt_\mathbb{A}(\Es) = \opt_\mathbb{A}(\N), \Es \subseteq \N \}.
\end{equation}
In other words, $\Es$ is an invariant set of minimal cardinality.
\end{defn}
One key step in the non-convex scenario approach is designing algorithms $\mathbb{B}$ to search for essential sets. Section \ref{sec:structural_properties_of_general_scenario_problems} reveals the structure of general non-convex scenario problems, which lays the cornerstone for algorithms to obtain essential sets. Section \ref{sec:structural_properties_of_general_scenario_problems} also gives one example of designing more efficient algorithms by exploiting the structural properties of specific problems.

\section{Computing Essential Sets for Convex and Non-convex Scenario Problems} 
\label{sec:structural_properties_of_general_scenario_problems}
Searching for essential sets is an important step in the non-convex scenario approach. However, the only known general algorithm to obtain essential sets is enumerating all $2^N$ possibilities by solving $2^N$ non-convex problems. This implies that searching for essential sets is in general computationally prohibitive. Section \ref{sub:non_convex_scenario_problems} first demonstrates the structural properties for general non-convex scenario problems, and provide conditions in which finding essential sets is relatively easier. Section \ref{sub:convex_scenario_problems} reveals the connection between non-convex and convex scenario problems. Section \ref{sub:two_stage_scenario_problems} studies a structured two-stage scenario problem, and illustrates an efficient algorithm to track down essential sets for two-stage scenario problems.
\subsection{Non-convex Scenario Problems} 
\label{sub:non_convex_scenario_problems}
Instead of solving $2^N$ non-convex problems to obtain essential sets, there are two ideas to track down invariant sets with small cardinalities (not necessarily essential): (1) removing each scenario and checking if the objective changes, this idea leads to the definition of support sets; (2) removing scenarios one by one, until the scenario set cannot be further reduced, this leads to the definition of irreducible set.
\begin{defn}[Support Scenario of $\SP_{\mathbb{A}}(\N)$]
\label{defn:support_scenario_non_convex}
Scenario $\xi^i \in \N$ is a \emph{support scenario} for the scenario problem $\SP_\mathbb{A}(\N)$ if its removal changes the solution $\opt_\mathbb{A}(\N)$ of $\SP_\mathbb{A}(\N)$. The set of support scenarios (support set in short) is denoted by $\Sc_\mathbb{A}$. 
\end{defn}
\begin{defn}[Irreducible Set]
\label{def:irreducible_set}
A scenario set $\Ir \subseteq \N$ for $\SP_\mathbb{A}(\N)$ is \emph{irreducible}, if (1) it is invariant, i.e., $\opt_\mathbb{A}(\Ir) = \opt_\mathbb{A}(\N)$; and (2) $\opt_\mathbb{A}(\Ir-s) < \opt_\mathbb{A}(\Ir) = \opt_\mathbb{A}(\N)$ for any $s \in \Ir$.
\end{defn}
\begin{assumption}[Monotonicity]
\label{ass:monotone_algorithm}
Let $\mathbb{A}: \Xi^N \rightarrow \mathbf{R}^n$ be an algorithm to obtain an optimal solution of a scenario problem $\SP(\N)$, whose optimal objective value is represented by $\opt_\mathbb{A}(\N)$. We assume that the algorithm $\mathbb{A}$ always satisfies $\opt_\mathbb{A}(\mathcal{M}) \le \opt_\mathbb{A}(\N)$ if $\mathcal{M} \subseteq \N$.
\end{assumption}
Assumption \ref{ass:monotone_algorithm} is indeed a weak assumption. Considering two scenario problems $\SP(\N)$ and $\SP(\mathcal{M})$ with $\mathcal{M} \subseteq \N$. Because the optimal solution to $\SP(\N)$ will be always feasible to $\SP(\mathcal{M})$, algorithm $\mathbb{A}$ could use $\opx_\mathbb{A}(\N)$ as a starting point and obtain solution $\opx_\mathbb{A}(\mathcal{M})$ that is not worse than $\opx_\mathbb{A}(\N)$.
\begin{lem}[Modified Lemma 2.10 of \cite{calafiore_random_2010}]
\label{lem:support_is_subset_invariant_nonconvex}
Suppose algorithm $\mathbb{A}$ satisfies Assumption \ref{ass:monotone_algorithm}.
Let $\Iv$ be any \emph{invariant} set for a (possibly non-convex) scenario problem $\SP_\mathbb{A}(N)$ and $\Sc$ stands for its support set, then $\Sc \subseteq \Iv$. Since any essential set $\Es$ or irreducible set $\Ir$ is also invariant, then $\Sc \subseteq \Es$ and $\Sc \subseteq \Ir$.
\end{lem}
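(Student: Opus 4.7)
The plan is to proceed by contradiction, leveraging the monotonicity assumption on algorithm $\mathbb{A}$ to show that any support scenario must lie in every invariant set. First I would unpack the definitions: a support scenario $\xi^i \in \Sc$ is one whose removal changes the optimum, i.e.\ $\opt_\mathbb{A}(\N - i) \neq \opt_\mathbb{A}(\N)$, while an invariant set $\Iv$ satisfies $\opt_\mathbb{A}(\Iv) = \opt_\mathbb{A}(\N)$. Monotonicity says that shrinking the scenario set can only decrease (or preserve) the optimum.

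The key observation is that monotonicity promotes the inequality ``$\neq$'' in the definition of a support scenario into a strict inequality in one direction. Concretely, since $\N - i \subseteq \N$, Assumption \ref{ass:monotone_algorithm} gives $\opt_\mathbb{A}(\N-i) \le \opt_\mathbb{A}(\N)$, and combined with $\opt_\mathbb{A}(\N-i) \neq \opt_\mathbb{A}(\N)$ we get the strict inequality $\opt_\mathbb{A}(\N - i) < \opt_\mathbb{A}(\N)$ for every $\xi^i \in \Sc$.

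Now fix any support scenario $\xi^i \in \Sc$ and suppose for contradiction that $\xi^i \notin \Iv$. Then $\Iv \subseteq \N - i$, so monotonicity yields $\opt_\mathbb{A}(\Iv) \le \opt_\mathbb{A}(\N - i) < \opt_\mathbb{A}(\N)$, contradicting the invariance identity $\opt_\mathbb{A}(\Iv) = \opt_\mathbb{A}(\N)$. Hence $\xi^i \in \Iv$, and since $\xi^i$ was arbitrary, $\Sc \subseteq \Iv$. The conclusions $\Sc \subseteq \Es$ and $\Sc \subseteq \Ir$ then follow immediately, since essential sets (Definition \ref{def:essential_set}) and irreducible sets (Definition \ref{def:irreducible_set}) are special cases of invariant sets.

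There is really no serious obstacle here: the argument is a one-line contradiction once monotonicity is invoked. The only subtlety worth noting is why we need Assumption \ref{ass:monotone_algorithm} at all in the non-convex setting. In the classical convex case (Lemma 2.10 of \cite{calafiore_random_2010}), the global optimum is algorithm-independent and the analogous monotonicity is automatic. For non-convex problems solved by a heuristic $\mathbb{A}$, one could in principle have $\opt_\mathbb{A}(\mathcal{M}) > \opt_\mathbb{A}(\N)$ for $\mathcal{M} \subseteq \N$ simply because $\mathbb{A}$ happened to get stuck at a worse local minimum on the smaller instance, and then the contradiction step would fail. Assumption \ref{ass:monotone_algorithm} (justified by warm-starting $\mathbb{A}$ on $\mathcal{M}$ from $\opx_\mathbb{A}(\N)$) is precisely what rules this pathology out and makes the contradiction go through.
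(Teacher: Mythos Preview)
Your proof is correct and follows essentially the same contradiction argument as the paper: assume a support scenario $s$ lies outside the invariant set $\Iv$, use $\Iv \subseteq \N - s$ together with monotonicity to get $\opt_\mathbb{A}(\Iv) \le \opt_\mathbb{A}(\N-s) < \opt_\mathbb{A}(\N)$, and contradict invariance. If anything, you are slightly more careful than the paper in spelling out why ``changes the solution'' becomes the strict inequality $\opt_\mathbb{A}(\N-s) < \opt_\mathbb{A}(\N)$ under monotonicity, whereas the paper invokes that inequality directly from the definition.
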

Lemma \ref{lem:support_is_subset_invariant_nonconvex} reveals the key relationship among the support set, essential and irreducible sets, and it lays the foundation of more important observations in Theorem \ref{thm:nondegenerate_unique_essential_irreducible}.
Lemma \ref{lem:support_is_subset_invariant_nonconvex} is a generalized version of Lemma 2.10 in \cite{calafiore_random_2010}, which proved similar results for convex scenario problems. The importance of Lemma \ref{lem:support_is_subset_invariant_nonconvex} is to show that the key assumption for such structural properties is the monotonicity of algorithm $\mathbb{A}$, instead of convexity (Assumption \ref{ass:convexity} in \cite{calafiore_random_2010}).

\FR{Figure \ref{fig:case3-degenerate-example} in Section \ref{sub:s_scuc_is_degenerate} provides an illustrative example of these definitions. The support set is $\{1\}$ since only removing scenario $1$ changes the optimal solution. Both $\{1,2\}$ and $\{1,3\}$ are irreducible sets and essential sets. This example shows that essential sets and irreducible sets could be non-unique. There are three invariant sets $\{1,2,3\}$, $\{1,2\}$ and $\{1,3\}$, all of them contain the support set $\{1\}$. This verifies Lemma \ref{lem:support_is_subset_invariant_nonconvex}.} 
For general (non-convex) scenario problems, the support set $\Sc$, essential set $\Es$ and irreducible set $\Ir$ are different. Under certain circumstances, these three concepts are interchangeable. Such circumstances are depicted by an extended definition of non-degeneracy for non-convex scenario problems.
\begin{defn}[Non-degeneracy of $\SP_\mathbb{A}(\N)$]
\label{defn:degenerate-nonconvex}
For a general scenario problem $\SP_\mathbb{A}(\N)$, let $\N$ stand for the set of all $N$ scenarios and $\Sc$ denote the support (scenario) set. The scenario problem $\SP_{\mathbb{A}}(\N)$ is said to be \emph{non-degenerate}, if $\opt_\mathbb{A}(\N) = \opt_\mathbb{A}(\Sc)$.
\end{defn}
\begin{cor}
\label{cor:nondegenerate_uniqueness}
Consider a (possibly non-convex) scenario problem $\SP_\mathbb{A}(\N)$ and an algorithm $\mathbb{A}$ satisfying Assumption \ref{ass:monotone_algorithm}.  If $\SP_\mathbb{A}(\N)$ is non-degenerate, then (1) it has a unique essential set $\Es = \Sc$; and (2) it has a unique irreducible set $\mathcal{R} = \Sc$.
\end{cor}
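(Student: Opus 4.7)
The plan is to show that under non-degeneracy the support set $\Sc$ itself is both the unique essential set and the unique irreducible set, by combining the non-degeneracy identity $\opt_\mathbb{A}(\N) = \opt_\mathbb{A}(\Sc)$ with the containment $\Sc \subseteq \Iv$ from Lemma \ref{lem:support_is_subset_invariant_nonconvex} and the monotonicity Assumption \ref{ass:monotone_algorithm}. The first observation to record is that non-degeneracy makes $\Sc$ itself an invariant set, so it is a legitimate candidate for both the essential and irreducible labels; everything after that is squeezing.

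For part (1), I would argue as follows. Let $\Es'$ be any essential set for $\SP_\mathbb{A}(\N)$. Essential sets are invariant, so Lemma \ref{lem:support_is_subset_invariant_nonconvex} gives $\Sc \subseteq \Es'$, hence $|\Es'| \ge |\Sc|$. On the other hand, $\Sc$ is itself invariant by non-degeneracy, and Definition \ref{def:essential_set} characterizes $\Es'$ as an invariant set of minimal cardinality, so $|\Es'| \le |\Sc|$. Combining, $|\Es'| = |\Sc|$, and a superset of $\Sc$ with the same cardinality must equal $\Sc$. Thus every essential set coincides with $\Sc$.

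For part (2), I would first verify that $\Sc$ itself is irreducible, then show uniqueness. Take any $s \in \Sc$. By Definition \ref{defn:support_scenario_non_convex}, $\opt_\mathbb{A}(\N - s) < \opt_\mathbb{A}(\N)$. Since $\Sc - s \subseteq \N - s$, monotonicity gives $\opt_\mathbb{A}(\Sc - s) \le \opt_\mathbb{A}(\N - s) < \opt_\mathbb{A}(\N) = \opt_\mathbb{A}(\Sc)$, so $\Sc$ satisfies both clauses of Definition \ref{def:irreducible_set}. For uniqueness, let $\Ir$ be any irreducible set. Lemma \ref{lem:support_is_subset_invariant_nonconvex} gives $\Sc \subseteq \Ir$. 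If $\Sc \subsetneq \Ir$, pick $r \in \Ir \setminus \Sc$. Then $\Sc \subseteq \Ir - r \subseteq \N$, so monotonicity yields $\opt_\mathbb{A}(\Sc) \le \opt_\mathbb{A}(\Ir - r) \le \opt_\mathbb{A}(\N)$, and the outer terms are equal by non-degeneracy. Therefore $\opt_\mathbb{A}(\Ir - r) = \opt_\mathbb{A}(\N) = \opt_\mathbb{A}(\Ir)$, contradicting the second clause of Definition \ref{def:irreducible_set}. Hence $\Ir = \Sc$.

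The proof is essentially bookkeeping once the right two facts are in hand: Lemma \ref{lem:support_is_subset_invariant_nonconvex} (which is the real engine) and the monotonicity of $\mathbb{A}$. The only subtle point is noticing that non-degeneracy together with monotonicity is exactly what is needed to squeeze both $\Es'$ and $\Ir$ between $\Sc$ and itself; there is no combinatorial or probabilistic obstacle to overcome. I do not anticipate any serious difficulty beyond being careful with the direction of the inequality in Assumption \ref{ass:monotone_algorithm} (smaller scenario set gives smaller optimal value).
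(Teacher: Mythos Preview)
Your proposal is correct and follows essentially the same approach as the paper: both proofs hinge on Lemma \ref{lem:support_is_subset_invariant_nonconvex} together with monotonicity and the non-degeneracy identity $\opt_\mathbb{A}(\Sc)=\opt_\mathbb{A}(\N)$, and your uniqueness argument for the irreducible set is virtually identical to the paper's contradiction. The only minor difference is that you explicitly verify $\Sc$ is irreducible before proving uniqueness, whereas the paper proceeds directly to showing any irreducible set equals $\Sc$; your version is slightly more complete in this respect but not substantively different.
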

\begin{lem}
\label{lem:removal_non_support}
\FR{Consider a (possibly non-convex) scenario problem $\SP_\mathbb{A}(\N)$ and an algorithm $\mathbb{A}$ satisfying Assumption \ref{ass:monotone_algorithm}. Suppose $k$ is not a support scenario for $\SP_\mathbb{A}(\N)$, then  }
\begin{equation}
  \Sc(\N) \subseteq \Sc(\N - k)
\end{equation}
\end{lem}
\FR{With Lemma \ref{lem:removal_non_support}, we can proved Theorem \ref{thm:nondegenerate_unique_essential_irreducible}, which are stronger than Corollary \ref{cor:nondegenerate_uniqueness}.}
\begin{thm}
\label{thm:nondegenerate_unique_essential_irreducible}
Consider a (possibly non-convex) scenario problem $\SP_\mathbb{A}(\N)$ and an algorithm $\mathbb{A}$ satisfying Assumption \ref{ass:monotone_algorithm}. The following three statements are equivalent:
(1) $\SP_{\mathbb{A}}(\N)$ is non-degenerate;
(2) $\SP_{\mathbb{A}}(\N)$ has a unique irreducible set $\Ir$;
and (3) $\SP_{\mathbb{A}}(\N)$ has a unique essential set $\Es$.
\end{thm}
Theorem \ref{thm:nondegenerate_unique_essential_irreducible} provides key insights in designing efficient algorithms $\mathbb{B}$. For non-convex problems, even if Assumption \ref{ass:non-degeneracy} does not always hold, $\SP_{\mathbb{A}}(\N)$ might still be non-degenerate in many instances (e.g., s-SCUC is non-degenerate in 192 out of 200 instances in Section \ref{sub:finding_essential_sets_for_s_scuc}). For those non-degenerate scenario problems, Theorem \ref{thm:nondegenerate_unique_essential_irreducible} shows that we are able to find the essential set by solving only $N$ instead of $2^N$ non-convex problems. Section \ref{sub:two_stage_scenario_problems} shows that the computational burden to obtain essential sets can be further reduced by exploiting the structure of specific problems.
\begin{rem}[Greedy Algorithms for Non-degenerate Problems]
\label{rem:find_essential_sets_nondegenerate}
When a scenario problem is non-degenerate, we can obtain the (unique) essential set by searching for the support set or irreducible set (Corollary \ref{cor:nondegenerate_uniqueness}). 
Algorithms of finding an irreducible set (Algorithm \ref{alg:find-irreducible-set} in Appendix \ref{sec:algorithms}) or the support set (e.g., Algorithm \ref{alg:find-support-set} in Appendix \ref{sec:algorithms}) are based on definitions. \FR{If the non-convex scenario problem is structured, e.g., \eqref{opt:two-stage-scenario-problem}, then more efficient algorithms are possible.}
\end{rem}
It is worth mentioning that similar results in this section for convex problems were first proved in \cite{calafiore_random_2010}. Section \ref{sub:non_convex_scenario_problems} can be regarded as an extension of classical results in \cite{calafiore_random_2010} towards non-convex scenario problems.

\subsection{Convex Scenario Problems} 
\label{sub:convex_scenario_problems}
For convex scenario problems $\SP(\N)$, any local minimum is a global minimum. And there are a broad range of algorithms to look for global optimal solutions. In the convex setting, we assume any algorithm $\mathbb{A}$ returns global optimal solutions to $\SP(\N)$ by default. In Section \ref{sub:the_scenario_approach_for_convex_problems} and \ref{sub:convex_scenario_problems}, we replace $\opx_\mathbb{A}(\N)$ and $\opt_\mathbb{A}(\N)$ by $x_\mathcal{N}^*$ and $c^\intercal x_\mathcal{N}^*$, respectively. We also remove subscripts $\mathbb{A}$ since the definitions of support set, invariant set and essential set for convex problems are no longer algorithm-dependent.
Furthermore, since the support set, invariant set and essential set are the same for non-degenerate convex scenario problems (Theorem \ref{thm:nondegenerate_unique_essential_irreducible}), we use the term \emph{essential set} for consistency. 
\begin{lem}[Monotonicity]
\label{lem:monotonicity}
Let $x_{\N}^*$ and $x_\mathcal{\mathcal{M}}^*$ stand for the optimal solution to the convex scenario problems $\SP(\N)$ and $\SP(\mathcal{M})$, respectively. Then $c^\intercal x_{\mathcal{M}}^* \le c^\intercal x_{\N}^*$ if $\mathcal{M} \subseteq \mathcal{N}$. 
\end{lem}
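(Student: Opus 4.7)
The plan is to argue purely by set inclusion of feasible regions: adding constraints can only shrink the feasible set, so the optimum over a larger constraint set is no better than the optimum over a smaller one. Convexity plays no direct role in the chain of inequalities — what the convex setting actually buys us is the right to talk about \emph{the} (algorithm-independent) global optimum, which is what lets us compare $c^\intercal x_\N^*$ and $c^\intercal x_\mathcal{M}^*$ as numbers attached to the problems themselves.

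First, I would make the feasible regions explicit. Using the notation from \eqref{form:scenario_problem_set}, define
\begin{equation*}
\mathcal{F}(\N) := \chi \cap \bigcap_{i \in \N} \mathcal{X}_i, \qquad \mathcal{F}(\mathcal{M}) := \chi \cap \bigcap_{i \in \mathcal{M}} \mathcal{X}_i.
\end{equation*}
Since $\mathcal{M} \subseteq \N$, the intersection on the right has at least as many sets to intersect when indexed by $\N$, so $\bigcap_{i \in \N} \mathcal{X}_i \subseteq \bigcap_{i \in \mathcal{M}} \mathcal{X}_i$, and hence $\mathcal{F}(\N) \subseteq \mathcal{F}(\mathcal{M})$.

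Next, by Assumption \ref{ass:feasibility_uniqueness} both problems are feasible and attain their optima, so $x_\N^* \in \mathcal{F}(\N) \subseteq \mathcal{F}(\mathcal{M})$. In other words, $x_\N^*$ is a feasible point for $\SP(\mathcal{M})$. Since $x_\mathcal{M}^*$ minimizes $c^\intercal x$ over $\mathcal{F}(\mathcal{M})$, it follows immediately that $c^\intercal x_\mathcal{M}^* \le c^\intercal x_\N^*$, which is the desired inequality.

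There is no real obstacle here; the statement is essentially a one-line feasibility argument. The only subtlety worth flagging is the distinction from Assumption \ref{ass:monotone_algorithm}: in the non-convex case, monotonicity had to be imposed on the algorithm because warm-starting from $\opx_\mathbb{A}(\N)$ is needed to guarantee $\opt_\mathbb{A}(\mathcal{M}) \le \opt_\mathbb{A}(\N)$, whereas for convex $\SP(\N)$ any solver returns the global minimum over $\mathcal{F}(\cdot)$ and monotonicity becomes automatic. That contrast is why Lemma \ref{lem:monotonicity} is stated as a lemma for the convex case rather than an assumption.
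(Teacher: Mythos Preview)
Your proof is correct and follows exactly the paper's own argument: the paper simply notes that $x_{\N}^*$ is always feasible for $\SP(\mathcal{M})$ and $x_{\mathcal{M}}^*$ is globally optimal, from which $c^\intercal x_{\mathcal{M}}^* \le c^\intercal x_{\N}^*$ is immediate. Your version spells out the feasible-region inclusion $\mathcal{F}(\N) \subseteq \mathcal{F}(\mathcal{M})$ more explicitly and adds helpful commentary on the role of convexity versus Assumption~\ref{ass:monotone_algorithm}, but the underlying idea is identical.
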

Because $x_{\N}^*$ is always feasible to $\SP(\mathcal{M})$ and $x_{\mathcal{M}}^*$ is globally optimal, it is obvious that $c^\intercal x_{\mathcal{M}}^* \le c^\intercal x_{\N}^*$.
Lemma \ref{lem:monotonicity} shows that any algorithm obtaining global optimal solutions will automatically satisfy Assumption \ref{ass:monotone_algorithm}. Therefore, all results in Section \ref{sub:non_convex_scenario_problems} hold for convex scenario problems.

\FR{One attractive feature of convex scenario problems is that the number of support scenarios is bounded by the number of decision variables $n$ (Theorem \ref{thm:helly_dimension_convex}). Thus many papers simply replace Helly's dimension $h$ with $n$ when computing sample complexity $N$. However, this simple approach often causes extremely conservative results. For example, \cite{modarresi_scenario-based_2018} reported the number of support scenarios in look-ahead economic dispatch is $3\sim5$, which is much less than the number of decision variables ($864$). It is necessary to compute the support scenarios (i.e., the essential set as shown in Corollary \ref{cor:nondegenerate_uniqueness}) and improve the theoretical guarantees.}

\FR{Unlike the non-convex setting in Section \ref{sub:non_convex_scenario_problems}, which requires solving the scenario problem $N$ times even in the easy cases. For convex scenario problems, we show that the essential set can be pinpointed with dual variables, which is much more computationally efficient.}

\FR{The first idea is to check the slackness of constraints $f(x,\xi^i) \le 0$. Lemma \ref{lem:inactive_are_not_support_scenario} shows that inactive scenarios (i.e., $f(x,\xi^i) < 0$) cannot be support scenarios. This simple idea could significantly reduce the search space in practice.} 
\begin{lem}
\label{lem:inactive_are_not_support_scenario}
\FR{
Consider a \emph{non-degenerate} scenario problem $\SP(\N)$ under Assumptions \ref{ass:feasibility_uniqueness} (feasibility) and \ref{ass:convexity} (convexity). Let $\Sc$ denote its support set and $\mathcal{P} := \{i: f(x,\xi^i) < 0\}$. Then $\Sc \cap \mathcal{P} = \emptyset$.
}
\end{lem}
\FR{Intuitively, we would conjecture that the set of active scenarios (i.e., at least one row in $f(x,\xi^i) \le 0$ is equality) is the support set, i.e., $\Sc \cup \mathcal{P} = \mathcal{N}$. Unfortunately, this conjecture is not correct. A counterexample is shown in Figure \ref{fig:primal-example}, in which $\mathcal{P} = \emptyset$ and $\mathcal{N} = \{1,2\}$ but $\Sc = \{2\}$.} \FR{Lemma \ref{lem:inactive_are_not_support_scenario} completely relies on the information from the original (primal) scenario problem. Theorem \ref{thm:support_scenario_multiplier} shows that stronger results are possible by utilizing the optimal dual solution $\{\mu^{i,*}\}_{i=1}^{N}$.}
\begin{figure}[htbp]
  \centering
  \includegraphics[width=\linewidth]{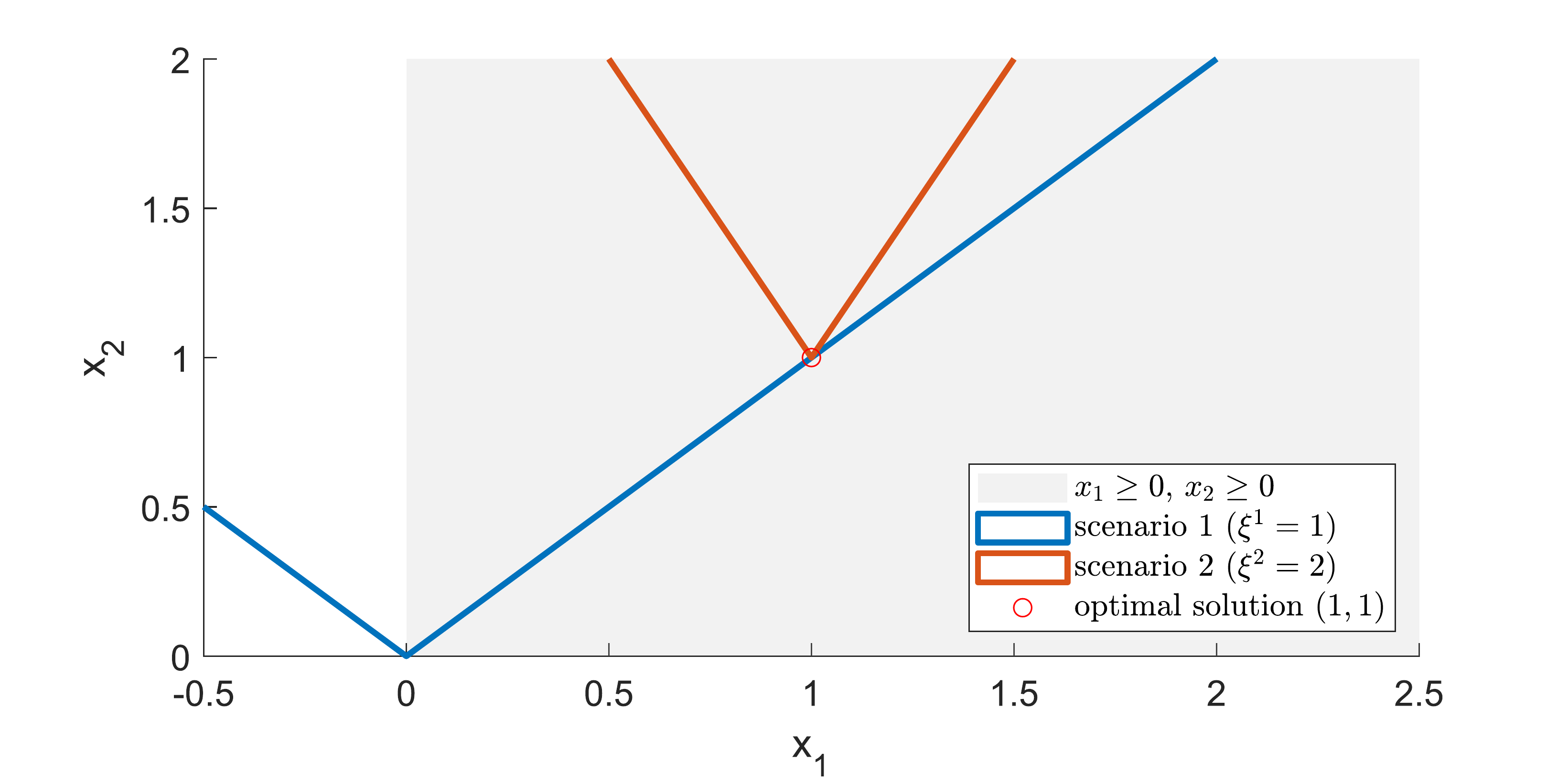}
  \caption{\FR{An illustrative example for Lemma \ref{lem:inactive_are_not_support_scenario} and Theorem \ref{thm:support_scenario_multiplier}. Constraints of two scenarios are visualized in the figure. The optimal solution is $(1,1)$, which is solely determined by scenario $2$, thus the support set $\Sc$ is $\{2\}$. The optimal dual solution is $\mu_{11} = 0, \mu_{12} \ge 0$, $\mu_{21} = 0.5-0.75 \mu_{12}$ and $\mu_{22} = 0.5-0.25 \mu_{12}$. More details are in Appendix \ref{sub:illustrative_example_in_figure_fig:primal-example}.}}
  \label{fig:primal-example}
\end{figure}
\begin{thm}
\label{thm:support_scenario_multiplier}
\FR{Consider a \emph{non-degenerate} scenario problem $\SP(\N)$ under Assumptions \ref{ass:feasibility_uniqueness} (feasibility) and \ref{ass:convexity} (convexity). Let $\mu^* := \{\mu^{i,*}\}_{i=1}^{N}$ denote an optimal dual solution (may not be unique) of $\SP(\N)$ and define $\mathcal{M}(\mu^*) := \{i \in \mathcal{N}: \|\mu^{i,*}\| > 0\}$.
(1) If $\xi^j$ is a support scenario, then $\|\mu^{j,*}\| > 0$. In other words, $\Sc \subseteq \mathcal{M}(\mu^*)$ for any optimal dual solution $\mu^*$. 
(2) If $\xi^j$ is \emph{not} a support scenario ($j \notin \Sc$), then there exists an optimal dual solution $\mu^{j,*} \in \mathbf{R}_+^{m}$ with $\|\mu^{j,*}\| = 0$. (3) There exists an optimal dual solution $\mu^\star$, such that $\mathcal{M}(\mu^\star) = \Sc$. (4) If the optimal dual solution $\mu^*$ is unique, then $\mathcal{M}(\mu^*) = \Sc$.
}
\end{thm}
\begin{rem}[The sparsest dual solution $\mu^{\star}$]
\label{rem:sparsest_dual_sol}
\FR{In practice, the challenge of applying Theorem \ref{thm:support_scenario_multiplier} is the non-uniqueness of optimal dual solutions. Ideally, we would like to find the dual solution $\mu^\star$ that satisfies $\mathcal{M}(\mu^\star) = \mathcal{S}$.
Notice this solution is the \emph{sparsest}\footnote{\FR{Consider two dual solutions to the scenario problem in Fig. \ref{fig:primal-example}: (1) $(\mu_{11}^{*},\mu_{12}^{*}) = (0,2/3)$ and $(\mu_{21}^{*},\mu_{22}^{*}) = (0,1/3)$; (2)  $(\mu_{11}^{\star},\mu_{12}^{\star}) = (0,0)$ and $(\mu_{21}^{\star},\mu_{22}^{\star}) = (1/2,1/2)$. Although both solutions have two non-zero entries, $\mu^\star$ is the sparsest solution as $\mathcal{M}(\mu^\star) = \{2\}$ while $\mathcal{M}(\mu^*) = \{1,2\}$.}} dual solution since all other solutions lead to larger set $\mathcal{M}(\mu^*) \supseteq \mathcal{S}$.
A possible approach is to directly solve the dual of $\SP(\N)$ with \emph{regularizers}, instead of solving the primal scenario problem $\SP(\N)$. The regularization terms are mainly for the purpose of getting sparse solutions. This is part of our ongoing works.
}
\end{rem}
\FR{In numerical simulations, the dual solution that an optimization solver returns depends on the choice of algorithm and implementation details.
For example, the scenario problem in Figure \ref{fig:primal-example} is non-degenerate\footnote{\FR{The non-degeneracy here is \emph{different} from the same term used in optimization theory. For example, primal degeneracy in optimization theory often refers to the existence of multiple dual solutions. Unless specified, the non-degeneracy in this paper is referred to Definition \ref{defn:degenerate-convex}.}} according to Definition \ref{defn:degenerate-convex}, but there is an infinite number of optimal dual solutions as calculated in Appendix \ref{sub:illustrative_example_in_figure_fig:primal-example}.
We used MOSEK v9.2.9 and GUROBI v9.0.0 to solve the dual of the example in Figure \ref{fig:primal-example}. MOSEK returns solution $\mu^1 = (0, 0.3692)$ and $\mu^2=(0.2231,0.4077)$, thus $\mathcal{M} = \{1,2\}$.  GUROBI returns solution $\mu^1=(0,0)$ and $\mu^2=(0.5,0.5)$, which is the \emph{sparsest} solution we prefer.}

\FR{When the sparsest dual solution is not readily available, we can use Algorithm \ref{alg:find-support-scenario-dual} to identify the support scenarios. Algorithm \ref{alg:find-support-scenario-dual} is based on Definition \ref{defn:support_scenario}, i.e., checking scenarios in the set $\mathcal{M}(\mu^*)$ one by one.}
\begin{algorithm}[H]
\begin{algorithmic}[1]
\STATE Compute the primal and dual solutions $x_{\mathcal{N}}^*$ and $\mu^{i,*}$ ($i=1,2,\cdots,N$) by solving $\SP(\N)$
\STATE Let $\mathcal{M} = \{i \in \mathcal{N}: \|\mu^{i,*}\| > 0\}$. Set $\mathcal{S} \leftarrow \emptyset$.
\FOR{$i \in \mathcal{M}$}
\STATE Solve $\text{SP}_{\mathcal{M}-i}$ and compute $x_{\mathcal{M}-i}^*$
\IF{$c^\intercal x_{\mathcal{M}-i}^* < c^\intercal x_{\mathcal{N}}^* (= c^\intercal x_{\mathcal{M}}^*)$} 
  \STATE $ \mathcal{S} \leftarrow \mathcal{S} + i$
\ENDIF
\ENDFOR
\end{algorithmic}
\caption{Finding Support Scenarios Using Dual Variables}
\label{alg:find-support-scenario-dual}
\end{algorithm}
\begin{rem}[Finding the Essential Set For Convex Problems]
\label{rem:find_support_scenarios_convex}
\FR{If the scenario problem is known to have a unique dual solution, then there is no need to use Algorithm \ref{alg:find-support-scenario-dual}. In practice, it is often difficult to know the uniqueness of the solution before solving the problem. So we need to rely on solvers or slightly modify the scenario problem to get the sparsest solution (Theorem \ref{thm:support_scenario_multiplier} and Remark \ref{rem:sparsest_dual_sol}). In the worst-case scenario, we compute an optimal dual solution then apply Algorithm \ref{alg:find-support-scenario-dual}, which requires us to solve the scenario problem $|\mathcal{M}|$ times. In many cases (especially in power system applications, e.g., \cite{modarresi_scenario-based_2018}), it is observed that the support scenarios are only a small subset of all $N$ scenarios, i.e., $|\mathcal{M}| \approx |\Sc| \ll |\N|$. Comparing with Algorithms \ref{alg:find-irreducible-set} or \ref{alg:find-support-set}, Algorithm \ref{alg:find-support-scenario-dual} only requires solving $\sim |\Sc|$ scenario problems, which is much more efficient since $|\Sc| \ll |\N|$.}
\end{rem}
\subsection{Two-stage Scenario Problems} 
\label{sub:two_stage_scenario_problems}
Section \ref{sub:non_convex_scenario_problems} shows that searching for essential sets can be relatively easier when a scenario problem is non-degenerate. However, finding a support set or irreducible set still requires solving $N$ non-convex problems. Motivated by SCUC, we show that more efficient algorithms are possible by exploiting the structure of specific problems. We study the following two-stage scenario problem in this subsection.
\begin{subequations}
\label{opt:two-stage-scenario-problem}
\begin{align}
\min_{y \in \mathcal{Y}}  c_y^\intercal y + \min_{\substack{x \in \mathcal{X}\\(x,y) \in \mathcal{H}}}~& c_x^\intercal x \\
\text{s.t.}~& x \in \cap_{i=1}^N \mathcal{U}_{i}
\end{align}
\end{subequations}
Constraints on the first-stage variables $y$ and the second-stage variables $x$ are denoted by $y \in \mathcal{Y}$ and $x \in \mathcal{X}$, respectively. Constraint $(x,y) \in \mathcal{H}$ represents the constraints coupling variables $x$ and $y$ in both stages. Set $\mathcal{U}_i$ stands for the constraints corresponding to the $i$th scenario $\xi^i$.

Problem \eqref{opt:two-stage-scenario-problem} is an abstract form of s-SCUC in Section \ref{sec:security_constrained_unit_commitment}.
Two key features of the two-stage scenario problem are: (1) the non-convexity \emph{only} comes from constraints $y \in \mathcal{Y}$ (e.g., binary variables in SCUC), all other constraints ($\mathcal{X}, \mathcal{H}, \mathcal{U}_i$) are convex; (2) uncertainties only exist in the second stage.

Let $(x^*, y^*)$ be a (possibly local) optimal solution that algorithm $\mathbb{A}$ returns. Given $y = y^*$, the second stage problem is convex by setting:
\begin{subequations}
\label{opt:second_stage_scenario_problem}
\begin{align}
\min_{\substack{x \in \mathcal{X}\\(x,y^*) \in \mathcal{H}}}~& c_x^\intercal x \\
\text{s.t.}~& x \in \cap_{i=1}^N \mathcal{U}_{i}
\end{align}
\end{subequations}
\begin{lem}
\label{lem:two_stage_support}
(1) Let $\hat{\Sc}$ represent the set of support scenarios of \eqref{opt:second_stage_scenario_problem} and $\Sc$ denote the support set for the two-stage problem \eqref{opt:two-stage-scenario-problem}, then $\hat{\Sc} \subseteq \Sc$; (2) If $\hat{\Sc}$ is invariant for \eqref{opt:two-stage-scenario-problem}, i.e., $\opt_\mathbb{A}(\hat{\Sc}) = \opt_\mathbb{A}(\N)$, then the two-stage scenario problem $\SP_{\mathbb{A}}(\N)$ is non-degenerate, \FR{therefore $\mathcal{S} = \hat{\mathcal{S}}$}.
\end{lem}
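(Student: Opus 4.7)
The plan is to leverage the convexity of the second-stage problem \eqref{opt:second_stage_scenario_problem} (once $y$ is fixed at $y^\ast$) together with the monotonicity assumption on algorithm $\mathbb{A}$, and then use part~(1) as a stepping stone to part~(2). I will denote by $\hat{\mathrm{opt}}(\mathcal{M})$ the optimal value of the second-stage problem \eqref{opt:second_stage_scenario_problem} when its scenario constraints are indexed by $\mathcal{M}$; note that $\hat{\mathrm{opt}}(\N)=c_x^\intercal x^\ast$ and $\opt_\mathbb{A}(\N)=c_y^\intercal y^\ast + c_x^\intercal x^\ast$.

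For part (1), I would fix an arbitrary $\xi^i \in \hat{\Sc}$. By definition of a support scenario for the convex second-stage problem, removing $\xi^i$ strictly reduces $\hat{\mathrm{opt}}$, so there exists a feasible $x'$ (i.e. $x' \in \mathcal{X}$, $(x',y^\ast)\in \mathcal{H}$, $x' \in \cap_{j\neq i}\mathcal{U}_j$) with $c_x^\intercal x' < c_x^\intercal x^\ast$. The pair $(x', y^\ast)$ is then feasible for the two-stage problem $\SP_\mathbb{A}(\N-i)$ and achieves objective value strictly less than $\opt_\mathbb{A}(\N)$. Using the discussion following Assumption~\ref{ass:monotone_algorithm} (algorithm $\mathbb{A}$ can always be warm-started from a known feasible point without degrading it), the output satisfies $\opt_\mathbb{A}(\N-i) \le c_y^\intercal y^\ast + c_x^\intercal x' < \opt_\mathbb{A}(\N)$. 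Hence $\xi^i$ is a support scenario of the two-stage problem, i.e. $\xi^i \in \Sc$, proving $\hat{\Sc}\subseteq \Sc$.

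Part (2) then follows cleanly. By part (1), $\hat{\Sc}\subseteq \Sc \subseteq \N$, and monotonicity (Assumption~\ref{ass:monotone_algorithm}) gives the chain
\begin{equation*}
\opt_\mathbb{A}(\hat{\Sc}) \le \opt_\mathbb{A}(\Sc) \le \opt_\mathbb{A}(\N).
\end{equation*}
The hypothesis of (2) forces the left-most and right-most quantities to coincide, so $\opt_\mathbb{A}(\Sc)=\opt_\mathbb{A}(\N)$, which is exactly the non-degeneracy condition in Definition~\ref{defn:degenerate-nonconvex}.

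The main subtlety, and the place I would be most careful, is the implicit algorithmic strengthening used in part (1): I require that $\mathbb{A}$, when solving $\SP_\mathbb{A}(\N-i)$, does no worse than what one obtains by freezing $y=y^\ast$ and solving the convex second-stage problem. This is stronger than the bare inequality in Assumption~\ref{ass:monotone_algorithm}, but it is consistent with the informal justification the authors give after that assumption (warm-starting from $\opx_\mathbb{A}(\N)$). The rest of the argument is essentially bookkeeping, since once convexity of the second stage gives the strict drop $c_x^\intercal x' < c_x^\intercal x^\ast$, the two-stage inequality $\opt_\mathbb{A}(\N-i) < \opt_\mathbb{A}(\N)$ and hence the inclusion $\hat{\Sc} \subseteq \Sc$ fall out immediately.
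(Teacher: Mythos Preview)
Your proof is correct and follows essentially the same approach as the paper: for part~(1), fix $s\in\hat{\Sc}$, obtain a strictly better second-stage solution $\hat{x}$ after removing $s$, and exhibit $(\hat{x},y^\ast)$ as a feasible point for $\SP(\N-s)$ with strictly smaller cost; for part~(2), sandwich $\opt_\mathbb{A}(\Sc)$ between $\opt_\mathbb{A}(\hat{\Sc})$ and $\opt_\mathbb{A}(\N)$ via monotonicity and the inclusion $\hat{\Sc}\subseteq\Sc\subseteq\N$. The paper's proof is terser and simply asserts that the existence of the feasible point $(\hat{x},y^\ast)$ makes $s$ a support scenario of $\SP_\mathbb{A}(\N)$; you are more careful in flagging that passing from a feasible point to $\opt_\mathbb{A}(\N-s)<\opt_\mathbb{A}(\N)$ requires the informal warm-start strengthening of Assumption~\ref{ass:monotone_algorithm}, which the paper leaves implicit.
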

Corollary \ref{cor:nondegenerate_uniqueness} and Theorem \ref{thm:nondegenerate_unique_essential_irreducible} demonstrate many nice properties of non-degenerate scenario problems.
Lemma \ref{lem:two_stage_support} gives a criteria of checking if the two-stage problem (e.g., s-SCUC) is non-degenerate. This lemma lays the foundation of Algorithm \ref{alg:find-support-subsample-structural} to search for essential sets of \eqref{opt:two-stage-scenario-problem}. The main idea of Algorithm \ref{alg:find-support-subsample-structural} is to first find the support scenarios of the second-stage problem \eqref{opt:second_stage_scenario_problem}, then verify if $\SP(\N)$ is degenerate using Lemma \ref{lem:two_stage_support}. In Section \ref{sub:finding_essential_sets_for_s_scuc}, it turns out that s-SCUC is non-degenerate in $96\%$ of cases, thus Algorithm \ref{alg:find-support-subsample-structural} could obtain essential sets of s-SCUC (in Section \ref{sub:finding_essential_sets_for_s_scuc}) in a much shorter time.


\section{Security-Constrained Unit Commitment with Probabilistic Guarantees} 
\label{sec:security_constrained_unit_commitment}
\subsection{Nomenclature} 
\label{sub:nomenclature}
The number of loads, generators, wind farms, transmission lines, contingencies, and snapshots are denoted by $n_d$, $n_g$, $n_w$, $n_l$, $n_k$ and $n_t$, respectively.
\begin{labeling}{$\iota \in \{t+1,\cdots,n_t\}$}
\item[$k \in \{0,1,\cdots,n_k\}$] contingency index
\item[$t \in \{0,1,\cdots,n_t\}$] time (snapshot) index
\item[$\iota \in \{t+1,\cdots,n_t\}$] additional time (snapshot) index in constraints \eqref{opt:det-SCUC-minon-time} and \eqref{opt:det-SCUC-minoff-time}
\end{labeling}
Binary decision variables (at time $t$):
\begin{labeling}{$z^{t} \in \{0,1\}^{n_g}$}
\item[$z^{t} \in \{0,1\}^{n_g}$] generator on/off states (commitment)
\item[$u^t \in \{0,1\}^{n_g}$] generator $i$ is on if $u_i^t = 1$
\item[$v^t \in \{0,1\}^{n_g}$] generator $i$ is off if $v_i^t = 1$ 
\end{labeling}
Continuous decision variables (at time $t$, contingency $k$):
\begin{labeling}{$g^{t,k} \in \mathbf{R}^{n_g}$}
\item[$g^{t,k} \in \mathbf{R}^{n_g}$] generation output
\item[$r^{t} \in \mathbf{R}^{n_g}$] reserve 
\end{labeling}
Parameters and constants:
\begin{labeling}{$a^k \in \{0,1\}^{n_g}$}
\item[$a^k \in \{0,1\}^{n_g}$] generator availability in contingency $k$ 
\item[$\alpha_k \in \mathbf{R}_+$] weight of contingency $k$
\item[$c_g \in \mathbf{R}^{n_g}$] generation costs
\item[$c_z \in \mathbf{R}^{n_g}$] no load cost 
\item[$c_r \in \mathbf{R}^{n_g}$] reserve costs
\item[$c_u \in \mathbf{R}^{n_g}$] startup cost 
\item[$c_v \in \mathbf{R}^{n_g}$] shutdown cost 
\item[$\hat{d}^{t} \in \mathbf{R}^{n_d}$] load forecast (time $t$)
\item[$\tilde{d}^{t} \in \mathbf{R}^{n_d}$] load forecast error (time $t$)
\item[$\hat{w}^{t} \in \mathbf{R}^{n_w}$] wind forecast (time $t$) 
\item[$\tilde{w}^{t} \in \mathbf{R}^{n_w}$] wind forecast error (time $t$) 
\item[$\overline{g} \in \mathbf{R}^{n_g}$] generation upper bounds
\item[$\underline{g} \in \mathbf{R}^{n_g}$] generation lower bounds
\item[$\overline{\gamma} \in \mathbf{R}^{n_g}$] ramping upper bounds
\item[$\underline{\gamma} \in \mathbf{R}^{n_g}$] ramping lower bounds 
\item[$\underline{u}_i \in \mathbf{R}_+$] minimum on time for generator $i$
\item[$\underline{v}_i \in \mathbf{R}_+$] minimum off time for generator $i$
\end{labeling}
\subsection{Deterministic SCUC (d-SCUC)} 
\label{sub:deterministic_scuc}
Deterministic security-constrained unit commitment (d-SCUC) \eqref{opt:det-SCUC} seeks optimal commitment and startup/shutdown decisions $(z^t,u^t, v^t)$, generation and reserve schedules $(g^{t,k}, r^t)$ for a horizon of time steps, typically $24$ hours. Security constraints ensures the reliability of the power system after an unexpected event occurs.
\begin{subequations}
\label{opt:det-SCUC}
\begin{align}
\min_{z,u,v,g,r}~ & \sum_{t=1}^{n_t} \Big(  c_z^\intercal z^t + c_u^\intercal u^t + c_v^\intercal v^t + c_r^\intercal r^t + \sum_{k=0}^{n_k} \alpha_k c_g^\intercal g^{t,k} \Big) \label{opt:det-SCUC-obj} \\
\text{s.t.}~& \mathbf{1}^\intercal g^{t,k} + \mathbf{1}^\intercal \hat{w}^t  \ge \mathbf{1}^\intercal \hat{d}^t \label{opt:det-SCUC-balance}\\
& \underline{f} \le H_g^{t,k} g^{t,k} + H_w^{t,k} w^{t,k} - H_d^{t,k} d^{t,k} \le \overline{f} \label{opt:det-SCUC-line} \\
& a^k \circ \underline{\gamma} \le g^{t,k} - g^{t-1,k} \le a^k \circ \overline{\gamma}  \label{opt:det-SCUC-ramp} \\
& a^k \circ (g^{t,0} - r^t) \le g^{t,k} \le a^k \circ (g^{t,0} + r^t) \label{opt:det-SCUC-contingency} \\
& \hspace{109pt} k \in [0,n_k], t \in [1,n_t] \nonumber \\
& \underline{g} \circ  z^t \le g^{t,0} \le \overline{g} \circ  z^t \label{opt:det-SCUC-gencap}\\
& \underline{g} \circ  z^t \le g^{t,0} - r^t \le g^{t,0} + r^t \le \overline{g} \circ  z^t \label{opt:det-SCUC-resgencap}\\
& z^{t-1} - z^t + u^t \ge 0 \label{opt:det-SCUC-startup} \\
& z^t - z^{t-1} + v^t \ge 0 \label{opt:det-SCUC-shutdown} \\
& \hspace{157pt} t \in [1,n_t] \nonumber \\
& z_i^{t} - z_i^{t-1} \le z_i^\iota, ~ i \in [1,n_g] \label{opt:det-SCUC-minon-time} \\
& \hspace{23pt} \iota \in [t+1,\min\{t+\underline{u}_i-1,n_t\}], t \in [2,n_t] \nonumber \\
& z_i^{t-1} - z_i^{t} \le 1- z_i^\iota, ~ i \in [1,n_g] \label{opt:det-SCUC-minoff-time}\\
& \hspace{23pt} \iota \in [t+1,\min\{t+\underline{v}_i-1,n_t\}], t \in [2,n_t] \nonumber
\end{align}
\end{subequations}
The objective of \eqref{opt:det-SCUC} is to minimize total operation costs, including no-load costs $c_z^\intercal z^t$, startup costs $c_u^\intercal u^t$, shutdown costs $c_v^\intercal v^t$, generation costs $c_g^\intercal g^{t,k}$ and reserve costs $c_r^\intercal s^t$. Security constraints ensure: enough supply to meet demand \eqref{opt:det-SCUC-balance}, transmission line flow within limits \eqref{opt:det-SCUC-line}, generation levels within ramping limits \eqref{opt:det-SCUC-ramp} and capacity limits \eqref{opt:det-SCUC-gencap} in any contingency $k$. Constraints \eqref{opt:det-SCUC-contingency} and \eqref{opt:det-SCUC-resgencap} are about the relationship between generation and reserve in any contingency $k$. Constraints \eqref{opt:det-SCUC-startup}-\eqref{opt:det-SCUC-shutdown} are the logistic constraints about commitment status, startup and shutdown decisions. Minimum on/off time constraints for all generators are in \eqref{opt:det-SCUC-minon-time}-\eqref{opt:det-SCUC-minoff-time}. Constraints \eqref{opt:det-SCUC-ramp}-\eqref{opt:det-SCUC-resgencap} also guarantee the consistency of generation levels $g^{t,k}$ with commitment decisions $z^t$ and generator availability $a^k$ in contingency $k$ \cite{geng_chance-constrained_2019}. 

To reveal the structure of d-SCUC, we define the sets below:
\begin{subequations}
\begin{align}
\mathcal{B} &:= \big\{ (z,u,v): \eqref{opt:det-SCUC-startup}, \eqref{opt:det-SCUC-shutdown}, \eqref{opt:det-SCUC-minon-time}, \eqref{opt:det-SCUC-minoff-time} \big\}  \\
\mathcal{C} &:= \big\{ (g,r):  \eqref{opt:det-SCUC-balance}, \eqref{opt:det-SCUC-line}, \eqref{opt:det-SCUC-ramp}, \eqref{opt:det-SCUC-contingency} \big\}  \\
\mathcal{H} &:= \big\{ (z,g,r): \eqref{opt:det-SCUC-gencap}, \eqref{opt:det-SCUC-resgencap}  \big\}
\end{align}
\end{subequations}
Sets $\mathcal{B}$ and $\mathcal{C}$ stand for the \emph{deterministic} constraints for binary and continuous variables, respectively. 
Set $\mathcal{H}$ represents the hybrid constraints related with both continuous and binary variables.
Then d-SCUC can be succinctly represented as 
\begin{align*}
\min_{z,u,v,g,r}~ & \eqref{opt:det-SCUC-obj}  \\
\text{s.t.}~ &  (z,u,v) \in \mathcal{B},~(g,r) \in \mathcal{C},~(z,g,r) \in \mathcal{H}
\end{align*}

\subsection{Chance-constrained SCUC (c-SCUC)} 
\label{sub:chance_constrained_scuc}
The d-SCUC formulation utilizes the expected wind generation and load forecast, it does not take the uncertainties from wind and load into consideration.
\FR{Various formulations of chance-constrained SCUC (c-SCUC) have been proposed, e.g., \cite{ozturk_solution_2004,wang_chance-constrained_2012}.}
The c-SCUC formulations explicitly guarantee the system security with a tunable level of risk $\epsilon$ with respect to uncertainties. Instead of using expected load $\hat{d}^t$ as in \eqref{opt:det-SCUC}, we consider loads $d^t$ as forecast $\hat{d}^t$ plus a random forecast error $\tilde{d}^t$ (i.e., $d^t = \hat{d}^t + \tilde{d}^t$).
\begin{multline} \label{opt:chance-constrained-in-SCUC}
\mathbb{P}_{\tilde{w}\times\tilde{d}}\Big( \mathbf{1}^\intercal g^{t,k} + \mathbf{1}^\intercal (\hat{w}^t + \tilde{w}^t) \ge \mathbf{1}^\intercal (\hat{d}^t + \tilde{d}^t ), \\
\underline{f} \le H_g^{t,k} g^{t,k} + H_w^{t,k} (\hat{w}^{t}+\tilde{w}^{t}) - H_d^{t,k} (\hat{d}^{t}+\tilde{d}^{t}) \le \overline{f}, \\
k \in [0,n_k], t \in [1,n_t] \Big ) \ge 1 - \epsilon
\end{multline}
We define set $\mathcal{U} := \{g: \eqref{opt:chance-constrained-in-SCUC}\}$ to represent all constraints related with uncertainties. Then the formulation of chance-constrained Security-constrained Unit Commitment (c-SCUC) is presented below. Comparing with d-SCUC, the only difference of c-SCUC is the addition of the chance constraint \eqref{opt:chance-constrained-in-SCUC}. The chance constraint guarantees there will be enough supply to meet the net demand with probability no less than $1 - \epsilon$.
\begin{align*}
\min_{z,u,v,g,r}~ & \eqref{opt:det-SCUC-obj}  \\
\text{s.t.}~ & (z,u,v) \in \mathcal{B},~(g,r) \in \mathcal{C},~(z,g,r) \in \mathcal{H}\\
& \mathbb{P}\big ( g \in \mathcal{U}  \big) \ge 1 - \epsilon
\end{align*}


\subsection{Scenario-based SCUC (s-SCUC)} 
\label{sub:scenario_based_scuc}
The scenario approach was mainly targeted at convex problems (see Assumption \ref{ass:convexity}), whereas SCUC is non-convex by nature due to on/off commitment decisions. Consequently, the scenario approach was considered \emph{not applicable} for c-SCUC. An extended version of the scenario approach was proposed recently in \cite{campi_general_2018}, which makes it applicable for non-convex problems such as SCUC. Another related paper is in \cite{marley_ac-qp_2016}, which applies Theorem \ref{thm:scenario_theory_nonconvex_posterior} on the AC optimal power flow problem and obtain similar posterior guarantees.

Using the scenario approach, c-SCUC is \FR{approximated by} the scenario-based SCUC (s-SCUC) problem below:
\begin{align*}
\min_{(z,u,v) \in \mathcal{B}}~& \sum_{t=1}^{n_t} \Big( c_z^\intercal z^t + c_u^\intercal u^t + c_v^\intercal v^t \Big) + \\
& \min_{(z,g,r) \in \mathcal{H}} \sum_{t=1}^{n_t} \Big( c_r^\intercal r^t + \sum_{k=0}^{n_k} \alpha_k c_g^\intercal g^{t,k} \Big) \nonumber \\
 & \hspace{1cm}\text{s.t.}~ (g,r) \in \mathcal{C} \\
 & \hspace{1.5cm} g \in \cap_{i=1}^N \mathcal{U}_i \nonumber 
\end{align*}
\begin{rem}
\label{rem:structural_prop_c-SCUC}
SCUC is a two-stage optimization problem by nature, it has the following nice properties. Firstly, the non-convexity \emph{only} exists in the first stage, i.e., $y \in \mathcal{Y}$. Given a first-stage solution $y$, the second stage is a simple linear program. Secondly, uncertainties come from renewables in the operation stage (only in the second stage).
Based on the nice structural properties above, Section \ref{sub:two_stage_scenario_problems} shows that we are able to track down essential sets by solving two MILPs and $\sim |\Sc|$ linear programs.
\end{rem}

\subsection{Degeneracy of s-SCUC} 
\label{sub:s_scuc_is_degenerate}
This section presents an example to show that s-SCUC could be degenerate in many cases, which violates Assumption \ref{ass:non-degeneracy}. Therefore almost all results of the classical scenario approach are not applicable. For s-SCUC, theoretical guarantees are \emph{only possible} through the non-convex scenario approach in Section \ref{sub:the_scenario_approach_for_non_convex_problems}.

We use a 3-bus system to illustrate the degeneracy of s-SCUC. Configurations of the 3-bus system are in \cite{geng_computing_2019}. In order to visualize the feasible region of s-SCUC, we simplify the problem by (1) only considering one snapshot ($n_t=1$) and ignoring initial status (thus no $u,v$ variables); (2) removing reserve constraints (no $r$ variables). By doing this, there are only four decision variables left: $z_1, z_2, g_1,g_2$. The on/off states $z_1,z_2$ can be inferred from values of $g_1$ and $g_2$, therefore the feasible region of the simplified s-SCUC can be visualized on the $(g_1, g_2)$-plane.



\begin{figure*}[htbp]
  \centering
  \subfloat[Illustration of the feasible region with constraints of all scenarios ($\mathcal{U}_1$,$\mathcal{U}_2$,$\mathcal{U}_3$).]{\includegraphics[width=0.5\linewidth]{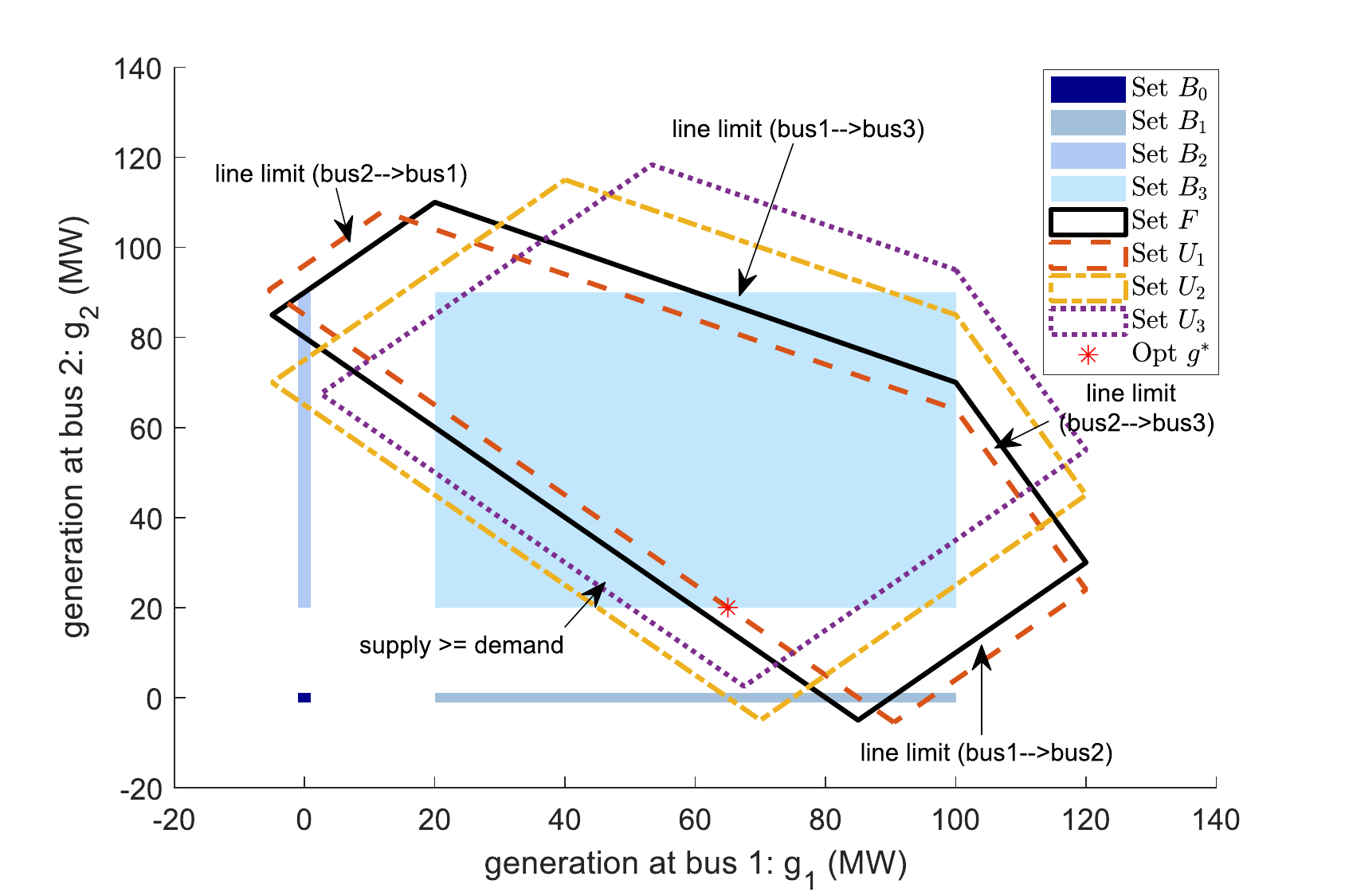}\label{fig:case3-degenerate-example-all}} 
  \subfloat[Illustration of the feasible region with only support scenarios ($\mathcal{U}_1$).]{\includegraphics[width=0.5\linewidth]{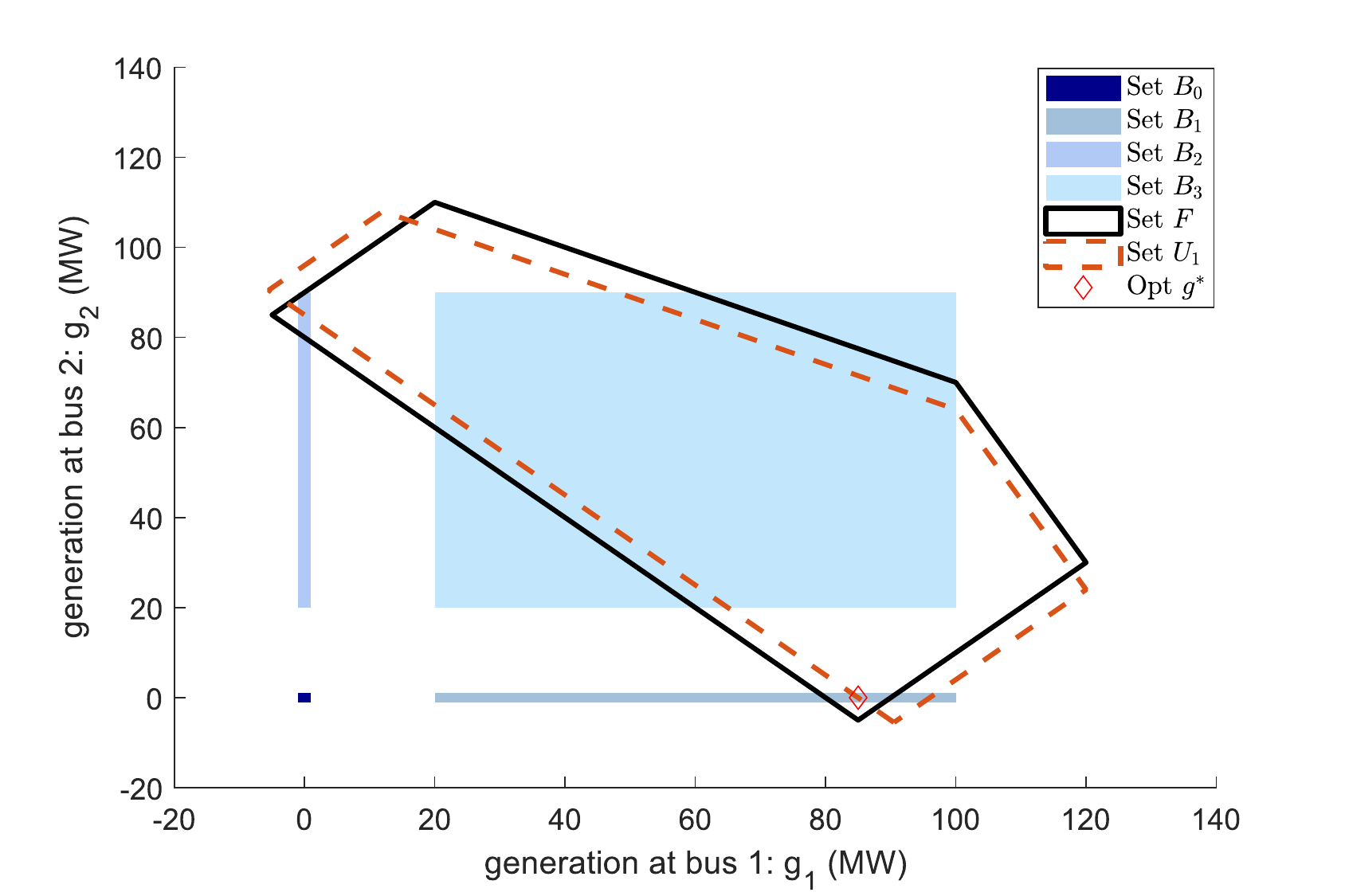}\label{fig:case3-degenerate-example-support}} 
  \caption{An illustrative example that s-SCUC is degenerate (3-bus system)}
  \label{fig:case3-degenerate-example}
\end{figure*}

Using Definition \ref{defn:degenerate-convex}, showing the degeneracy of s-SCUC includes three steps: (1) obtaining the optimal solution to $\SP(\N)$; (2) finding all support scenarios $\Sc$ of $\SP(\N)$; and (3) checking if the optimal solution of $\SP(\N)$ is the same as $\SP(\Sc)$.
Fig. \ref{fig:case3-degenerate-example-all} first visualizes constraints $\mathcal{B}_0\sim \mathcal{B}_3$, which represents the region of 4 possible generator on/off status (e.g., $\mathcal{B}_1: z_1=1, z_2 =0$, $\mathcal{B}_3: z_1=1, z_2 =1$). The black solid lines denote constraints \eqref{opt:det-SCUC-balance}, \eqref{opt:det-SCUC-line} and \eqref{opt:det-SCUC-gencap} using forecast values (d-SCUC). The red, yellow and purple dotted lines are three sets ($\mathcal{U}_1,\mathcal{U}_2,\mathcal{U}_3$) of constraints corresponding to three scenarios.
Given the setting that generator 1 is much cheaper than generator 2, we can easily eyeball the optimal solution with all constraints presented, marked by the red $*$. Next, we observe that removing scenario 1 ($\mathcal{U}_1$, red lines) changes the optimal solution, while removing scenario scenario 2 ($\mathcal{U}_2$, yellow lines) or scenario 3 ($\mathcal{U}_3$, purple lines) makes no difference. Thus scenario 1 is the only support scenario. Finally, we examine the scenario problem with only support scenarios presented. Fig. \ref{fig:case3-degenerate-example-support} shows that the optimal solution becomes the red $\diamond$ with only scenario 1, which is clearly different than the optimal solution in Fig. \ref{fig:case3-degenerate-example-all}. Hence, s-SCUC is a degenerate problem.


\section{Numerical Results} 
\label{sec:case_study}
\subsection{Settings of the 118-bus System} 
Numerical simulations were conducted on a modified 118-bus, 184-line, 54-generator, 24-hour system \cite{illinois_institute_of_technology_ieee_2004}. Most settings are identical as \cite{illinois_institute_of_technology_ieee_2004}, except 5 wind farms are added to the system as in \cite{pena_extended_2018}. The s-SCUC problems were solved using $64$ GB memory on the Hera server (hera.ece.tamu.edu), provided by Texas A\&M University. The mathematical models for s-SCUC was formulated using YALMIP \cite{lofberg_yalmip_2004} on Matlab R2019a and solved using Gurobi v8.1 \cite{gurobi_optimization_gurobi_2016}.

After obtaining a solution $\opx_{\mathbb{A}}(\N)$ to s-SCUC, Theorem \ref{thm:scenario_theory_nonconvex_posterior} provides an upper bound $\epsilon(N,|\Iv|,\beta)$ on the \emph{actual} violation probability $\mathbb{V}(\opx_{\mathbb{A}}(\N))$. The theoretical guarantee $\epsilon(N,|\Iv|,\beta)$ is referred as posterior $\epsilon$ in the numerical results. The actual violation probability $\mathbb{V}(\opx_{\mathbb{A}}(\N))$ is estimated by the out-of-sample violation probability $\hat{\epsilon}$, using an independent set of $10^6$ scenarios.

To quantify the randomness of the scenario approach, for each sample complexity $N=100,200,\cdots,1000$, we solve the corresponding s-SCUC problems on 10 independent sets of scenarios (i.e., $10$ Monte-Carlo simulations). Results in both Fig. \ref{fig:sc-SCUC-case118iit-fig_obj0eps0_2axis} and \ref{fig:sc-SCUC-case118iit-epsilon-epsilon} show the average, maximum and minimum values in 10 Monte-Carlo simulations. 


\subsection{Cost vs Security: a trade-off} 
\label{ssub:cost_security_trade_off}
Fig. \ref{fig:sc-SCUC-case118iit-fig_obj0eps0_2axis} shows the out-of-sample violation probability $\hat{\epsilon}$ and objective value (total cost). The shadowed area shows the max-min values in 10 Monte-Carlo simulations, and the solid line is the average value. It is shown that the system risk level (violation probability) is reduced by $83\%$ (from $\sim 30\%$ to $\sim 5\%$) by $\sim1.1\%$ increase in total system costs. Similar observations were found in \cite{modarresi_scenario-based_2018,geng_data-driven_2019-2,geng_chance-constrained_2019}.
\begin{figure}[htbp]
  \centering
  \includegraphics[width=\linewidth]{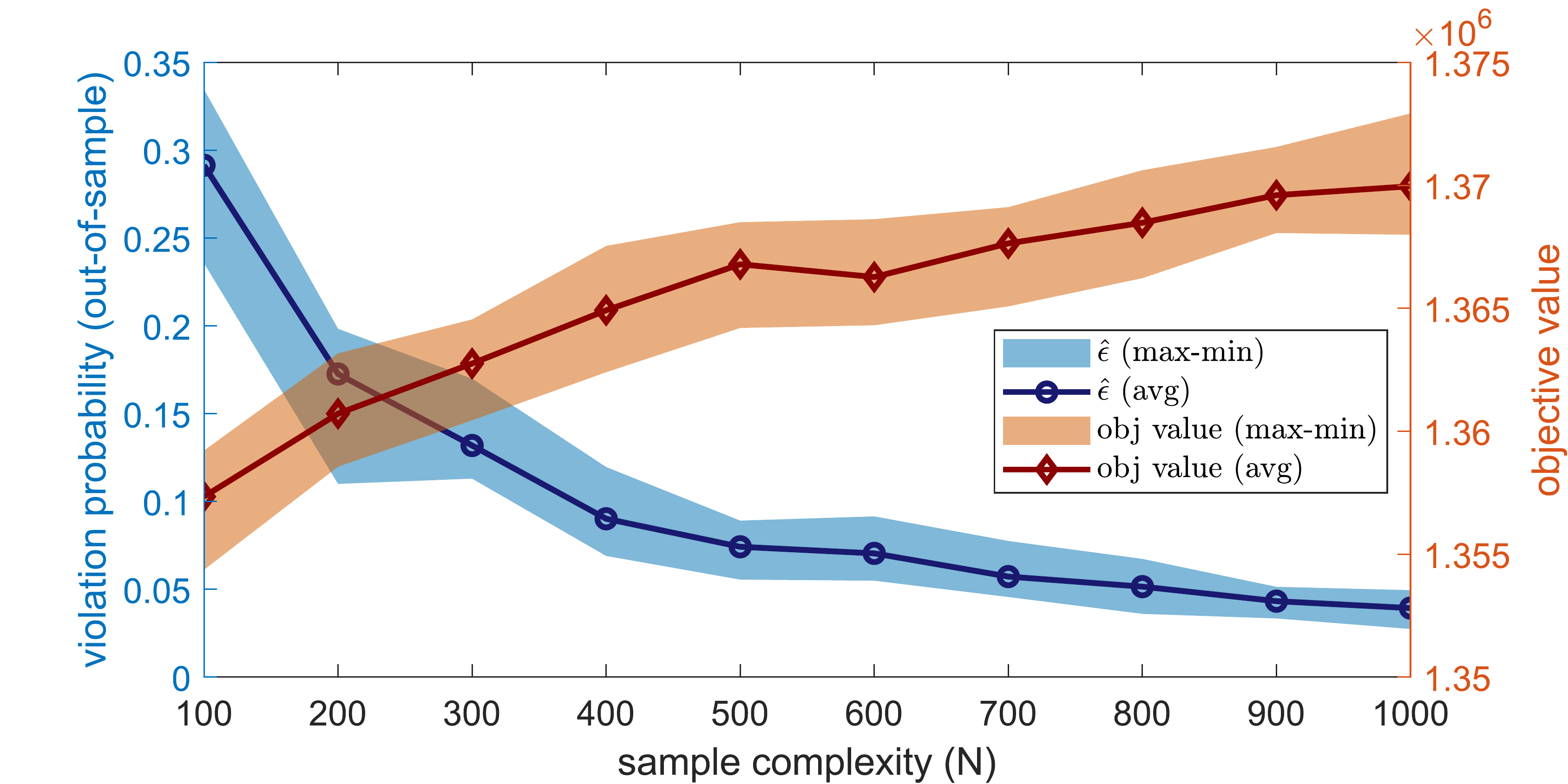}
  \caption{Cost vs Security: a Trade-off.}
  \label{fig:sc-SCUC-case118iit-fig_obj0eps0_2axis}
\end{figure}


\subsection{Violation Probability} 
\label{ssub:violation_probability}
Fig. \ref{fig:sc-SCUC-case118iit-epsilon-epsilon} presents the out-of-sample violation probability $\hat{\epsilon}$ and theoretical guarantees (posterior $\epsilon$ provided by Theorem \ref{thm:scenario_theory_nonconvex_posterior}). Since the cardinality of essential sets differ for each scenario problem (Fig. \ref{fig:sc-SCUC-case118iit-support-scenario-number}), the posterior guarantee $\epsilon$ is a band instead of a line. As illustrated in Fig. \ref{fig:sc-SCUC-case118iit-epsilon-epsilon}, the actual violation probability (approximated by $\hat{\epsilon}$) is bounded by the theoretical guarantees. This verifies the correctness of Theorem \ref{thm:scenario_theory_nonconvex_posterior}. The conservative ratio is $2\sim 4$ (e.g., when out-of-sample $\hat{\epsilon}$ is $\sim5\%$, Theorem \ref{thm:scenario_theory_nonconvex_posterior} gives an upper bound $10\% \sim 20\%$).
\begin{figure}[htbp]
  \centering
  \includegraphics[width=\linewidth]{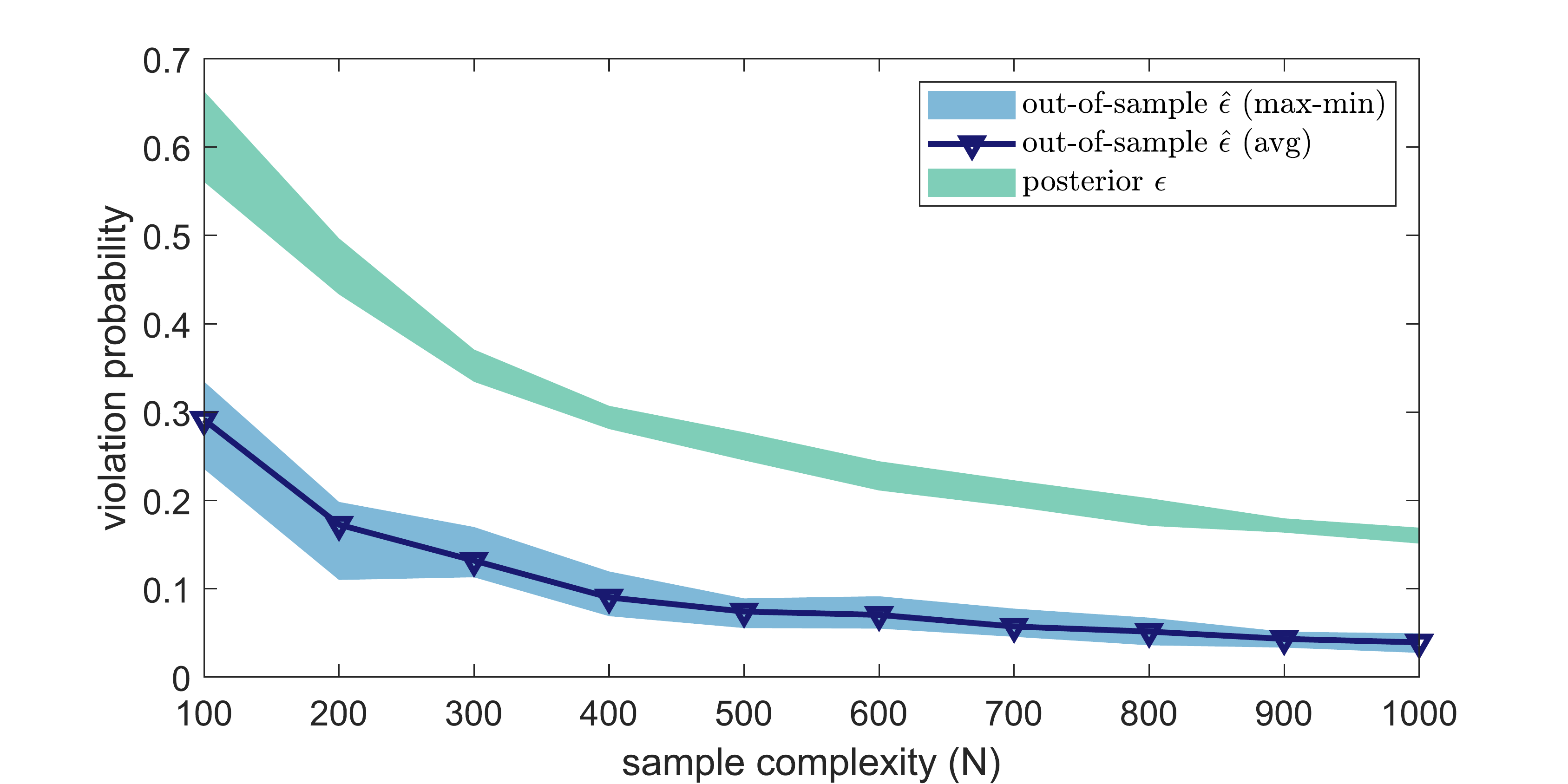}
  \caption{Out-of-sample Violation Probabilities and Theoretical Guarantees.}
  \label{fig:sc-SCUC-case118iit-epsilon-epsilon}
\end{figure}

\subsection{Searching for Essential Sets for s-SCUC} 
\label{sub:finding_essential_sets_for_s_scuc}
s-SCUC was observed to be non-degenerate in 192 out of 200 simulations\footnote{We conducted 10 simulation for 10 different sample complexities ($100,200,\cdots,1000$) under two different settings: with/without $N-1$ contingencies, both include transmission constraints.}. In other words, in $96\%$ cases, we are able to find an essential set by solving $5\sim35$ linear programs and $2$ mixed integer linear programs. 
It takes from 4934 seconds ($N=100$) to 6847 seconds ($N=1000$) to solve one MILP (s-SCUC). When searching for support scenarios for the second-stage problem (a linear program), it takes $281 \sim 388$ seconds to solve one LP.
For those 8 out of 200 simulations, it takes an extra $20$ hours to find an irreducible set using Algorithm \ref{alg:find-irreducible-set}. This computation time can be greatly reduced by tricks such as choosing appropriate starting points\footnote{For example, when removing scenarios $s$ and $t$ consecutively in Algorithm \ref{alg:find-irreducible-set}, the solution $\opt_\mathbb{A}(\N-s)$ is feasible to $\SP(\N-s-t)$ thus can serve as a good starting point.}.


\section{Discussions} 
\label{sec:discussions}

\subsection{Cardinality of Essential Sets} 
\label{sub:cardinality_of_essential_sets}
Fig. \ref{fig:sc-SCUC-case118iit-support-scenario-number} compares the cardinalities of essential sets for three cases: (a) c-SCUC with $N-1$ contingencies but without transmission constraints, results of case (a) are obtained from \cite{geng_chance-constrained_2019}); (b) c-SCUC with transmission constraints but without $N-1$ contingencies; and (c) c-SCUC with both transmission constraints and $N-1$ contingencies.
\begin{figure}[htbp]
  \centering
  \includegraphics[width=\linewidth]{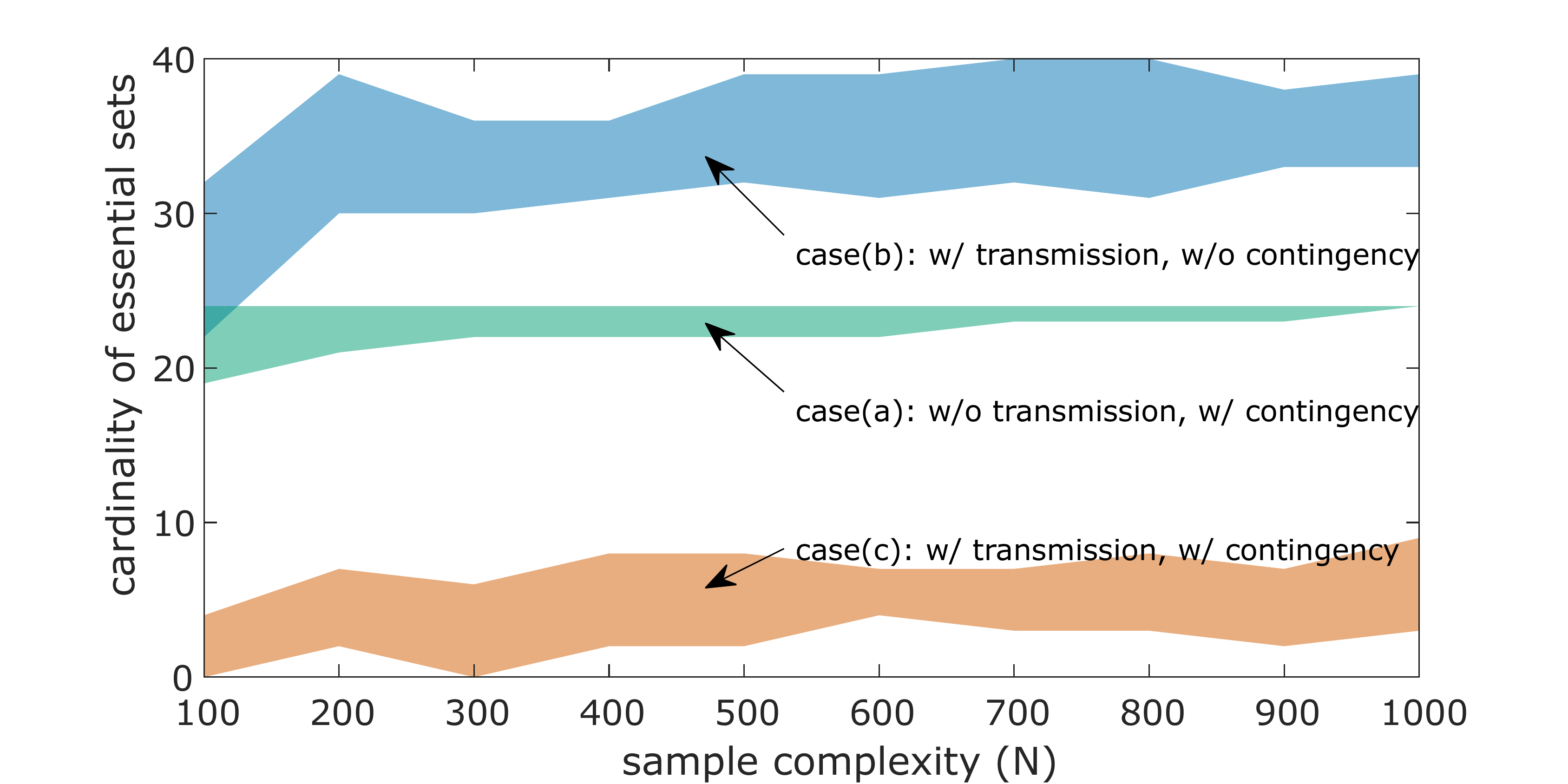}
  \caption{Cardinality of Essential Sets.}
  \label{fig:sc-SCUC-case118iit-support-scenario-number}
\end{figure}
Case (a) is the simplest, in \cite{geng_chance-constrained_2019} we show that the scenario problem for unit commitment satisfies the non-degeneracy assumption \ref{ass:non-degeneracy}, and the cardinality of essential sets is bounded by the number of snapshots $n_t$, i.e., $|\Sc| \le n_t = 24$ in Fig. \ref{fig:sc-SCUC-case118iit-support-scenario-number}. Cases (b) and (c) include transmission capacity constraints. As demonstrated in Section \ref{sub:s_scuc_is_degenerate}, s-SCUC could be degenerate with transmission constraints. Theoretically speaking, the cardinality of essential sets might be unbounded for non-convex problems. \FR{Fig. \ref{fig:sc-SCUC-case118iit-support-scenario-number} shows that the cardinality of essential sets is greatly smaller than the number of decision variables. For example, s-SCUC has about $4000$ binary variables and around $75000$ continuous decision variables, while the cardinalities of essential sets are $30\sim40$ in case (b) and $0\sim10$ in case (c).} This observation implies that the number of scenarios $N$ required could be much smaller than expected. 

Another interesting observation is that including $N-1$ contingency constraints reduces $|\Es|$. This observation has two implications. First, $N-1$ contingency constraints not only protect the system from unexpected device failures, they also help reduce the impacts of uncertainties from renewables. Second, including $N-1$ contingency constraints could help reduce sample complexity. Similar with the observations in \cite{modarresi_scenario-based_2018}, this observation indicates that the scenario approach might be of practical use.

\subsection{From Posterior to Prior Guarantees} 
\label{sub:from_posterior_to_prior_guarantees}
Theorem \ref{thm:scenario_theory_nonconvex_posterior} gives \emph{posterior} guarantees on the feasibility of solutions, namely, we calculate $\epsilon(N,k,\beta)$ \emph{after} obtaining the solution $\opx(\N)$.
Lemma \ref{lem:monotonicity_specified_epsilon_function} proves that the $\epsilon(N,k,\beta)$ function in \eqref{eqn:specified_epsilon_function} is monotone in $N$ and $k$. This implies that we can obtain prior guarantees. In other words, if the cardinality of essential sets is proved to be at most $h$ ($|\Es| \le h$), then we can compute the smallest $\hat{N}$ (e.g., using binary search) such that $\epsilon \ge 1 - \Big( \frac{\beta}{ \hat{N} \binom{\hat{N}}{h}} \Big)^{\frac{1}{\hat{N}-h}}$
holds for given $\epsilon$ and $\beta$. Then the solution $\opx_{\mathbb{A}}(\N)$ to the scenario problem using $\hat{N}$ scenarios has the guarantee $\mathbb{P}(\mathbb{V}(\opx_{\mathbb{A}}(\N)) \le \epsilon) \ge 1 - \beta$. This prior guarantee holds \emph{before} solving the scenario problem with $\hat{N}$ scenarios. If a rigorous bound $h$ on $|\Es|$ can be proved, then there is no need to numerically search for essential sets. This is particularly attractive compared with posterior guarantees. \FR{One example is the chance-constrained unit commitment (without transmission line limits) in \cite{geng_chance-constrained_2019}, in which we show that the cardinality of essential sets is bounded by the number of snapshots $n_t$, i.e., $|\Sc| \le n_t = 24$, also shown in case (a) of Fig. \ref{fig:sc-SCUC-case118iit-support-scenario-number}.
Although general non-trivial bounds on $|\Es|$ may not exist, results of \cite{geng_chance-constrained_2019} indicate that bounding $|\Es|$ is possible for structured non-convex problems.  }

\subsection{Optimality Gap vs Violation Probability} 
\label{sub:optimality_gap_versus_violation_probability}
\FR{Unlike Theorem \ref{thm:exact_feasibility_scenario_approach}, which only holds for the global optimal solution, the non-convex scenario approach theory (Theorem \ref{thm:scenario_theory_nonconvex_posterior}) works for any feasible solution to the scenario problem. In Figs. \ref{fig:sc-SCUC-case118iit-fig_eps01_fixx} and \ref{fig:sc-SCUC-case118iit-fig_epsgap_fixx}, we compare the performance of a suboptimal solution\footnote{For every $N$ and every Monte-Carlo run, both solutions are using the same scenarios.)} with the optimal solution (MIP Gap less than $0.01\%$) in Section \ref{ssub:cost_security_trade_off}. The suboptimal solution is obtained by fixing all $z_{i}^t = 1,u_{i}^t=0, v_{i}^t=0, \forall i, \forall t$ and solve the s-SCUC problem. In other words, the suboptimal solution does not take the optimal commitment of generators into account, it only optimizes the dispatch $(g,r)$ variables.
}

\begin{figure}[htbp]
  \centering
  \includegraphics[width=\linewidth]{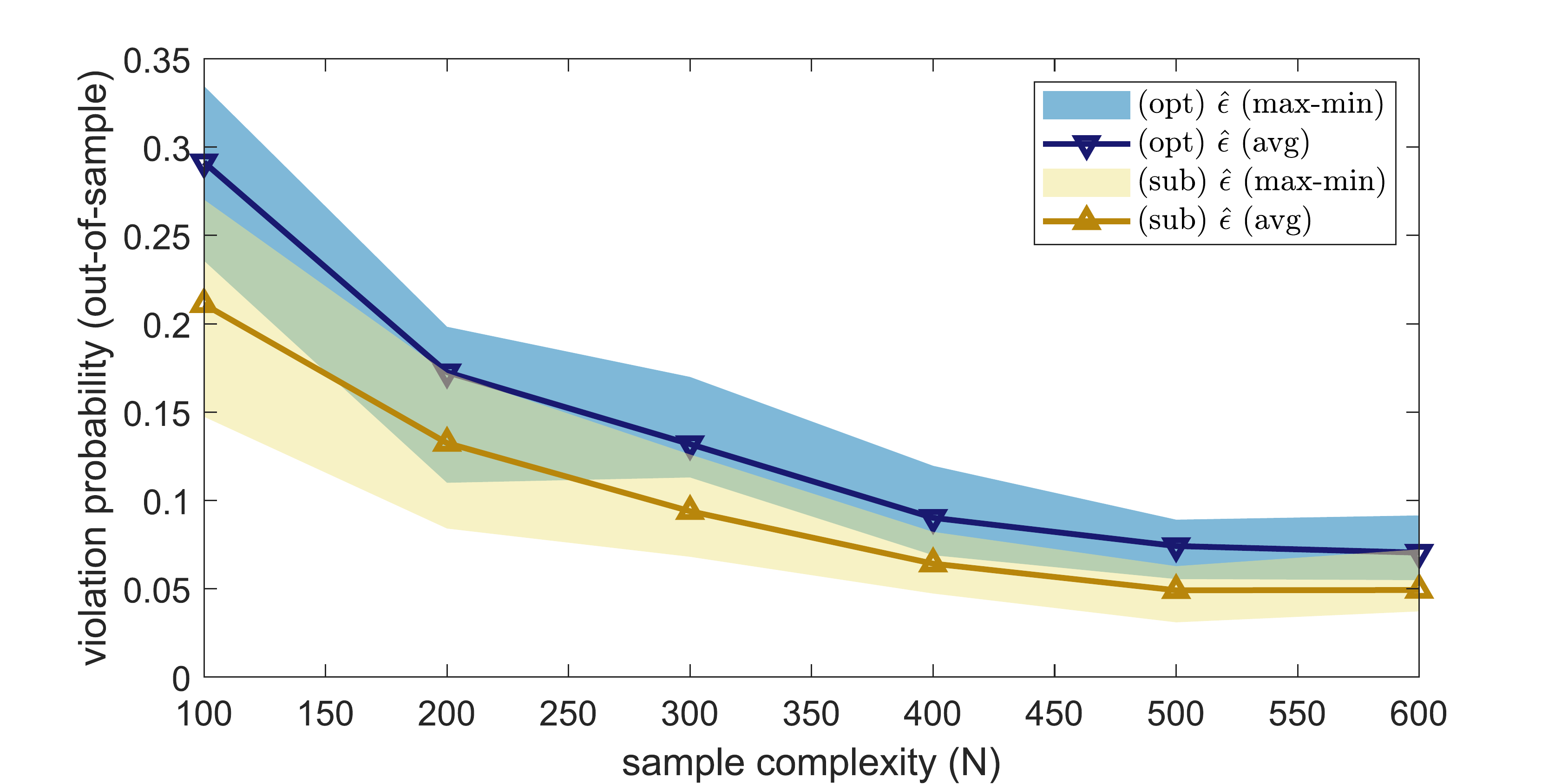}
  \caption{\FR{Out-of-sample Violation Probabilities of the Optimal Solution (opt) and a Suboptimal Solution (sub)}}
  \label{fig:sc-SCUC-case118iit-fig_eps01_fixx}
\end{figure}
\FR{Fig. \ref{fig:sc-SCUC-case118iit-fig_eps01_fixx} compares the out-of-sample violation probabilities. Using the same number of scenarios $N$, suboptimal solutions always have smaller violation probabilities (more secure or conservative). If violation probabilities of solutions are guaranteed to be acceptable ranges, then we should always pursue the optimal solution with smaller objective values to reduce conservativeness.}
\begin{figure}[htbp]
  \centering
  \includegraphics[width=\linewidth]{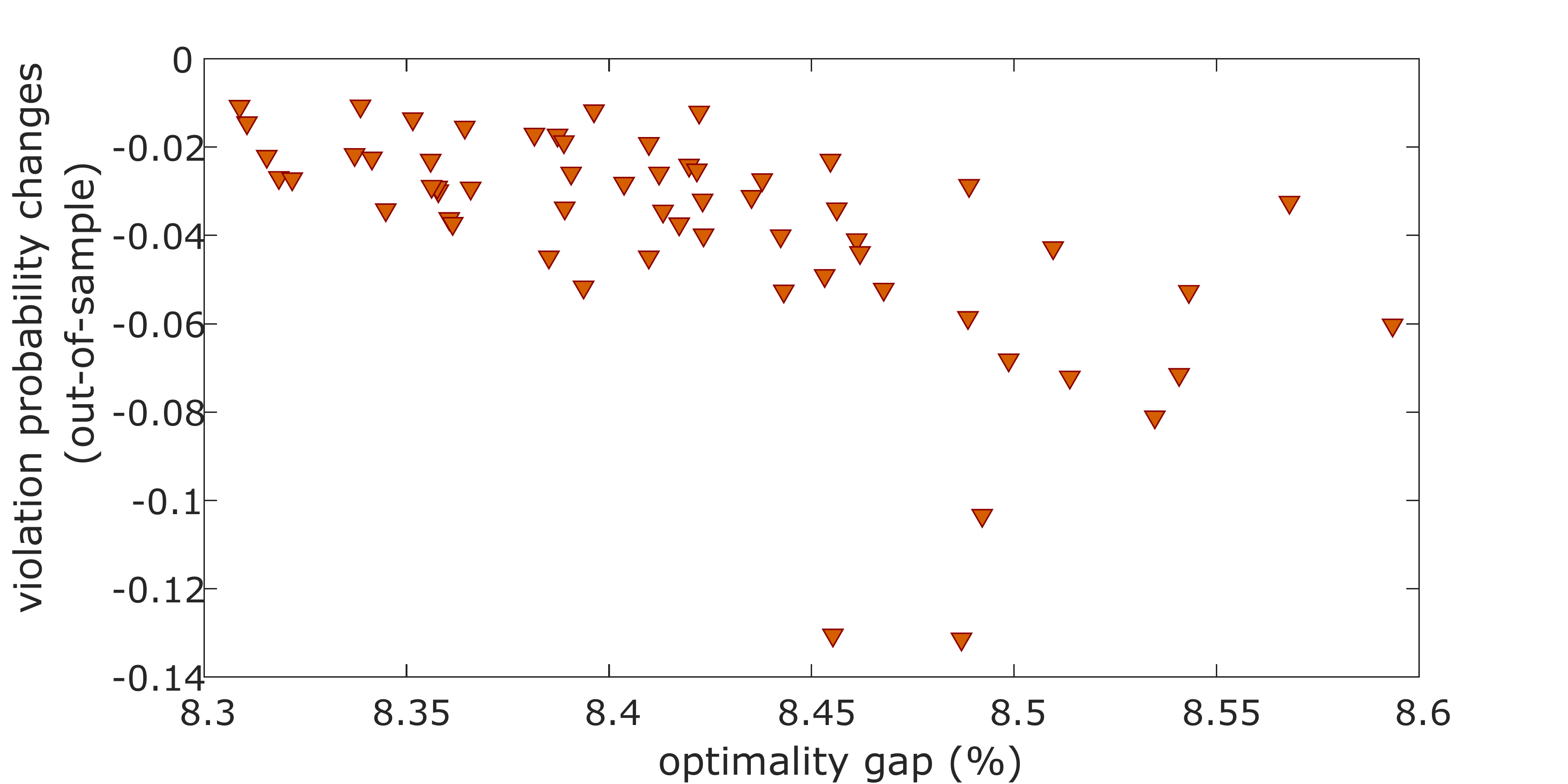}
  \caption{\FR{Optimality Gap vs Violation Probability. 60 dots in this figure represent 60 instances of s-SCUC solved.}}
  \label{fig:sc-SCUC-case118iit-fig_epsgap_fixx}
\end{figure}

\FR{Fig. \ref{fig:sc-SCUC-case118iit-fig_epsgap_fixx} further examines the relationship between optimality and security. The x-axis is the optimality gap, i.e., the percentage differences between suboptimal solutions and optimal solutions. The y-axis is the difference between violation probabilities of suboptimal and optimal solutions. Negative values indicate that violation probabilities of suboptimal solutions are always smaller than those of optimal solutions. There is also a subtle trend that larger optimality gaps lead to bigger decrease in violation probabilities.}

\section{Concluding Remarks} 
\label{sec:concluding_remarks}
\FR{This paper studies a core problem in the scenario approach theory, i.e., efficiently identifying essential sets for general scenario problems. For convex problems, we prove that the sparsest dual solution of the scenario problem could pinpoint the essential set. For non-convex problems, we provide conditions where simple algorithms based on definitions could return the essential set when the scenario problem is non-degenerate.}
We use chance-constrained Security-constrained Unit Commitment as a numerical example.
Case studies on an IEEE benchmark system show that the essential scenario set is only a small subset of all scenarios. This implies that we can obtain relatively robust solutions (i.e., small $\epsilon$) using only a moderate number of scenarios. Furthermore, we observe that some power engineering practices (e.g., $N-1$ criteria) can help us reduce the number of scenarios needed while maintaining the same level of risk.

Future work includes (1) reducing conservativeness by improving the complexity bound in Theorem \ref{thm:scenario_theory_nonconvex_posterior}; (2) a more systematic approach to compute the sparsest dual solution of convex scenario problems; and (3) investigating the performance of the (non-convex) scenario approach on larger-scale realistic power systems.

\appendices

\vspace{-0.3cm}
\section{Proofs} 
\label{sec:proofs}
The proof of Lemma \ref{lem:monotonicity_specified_epsilon_function} is through basic inequalities, we omitted it to meet the length requirement (max 13 pages). Complete details about the proof of Lemma \ref{lem:monotonicity_specified_epsilon_function} is available in the supplementary material and at \cite{geng_computing_2019}.

\begin{proof}[Proof of Lemma \ref{lem:support_is_subset_invariant_nonconvex} \cite{calafiore_random_2010}]
For the purpose of contradiction, we assume that there is a scenario $s \in \Sc$ but $s \notin \Iv$. According to the definition of support scenarios, $\opt_\mathbb{A}(\N-s) < \opt_\mathbb{A}(\N)$. However, Assumption \ref{ass:monotone_algorithm} claims that removing scenarios will not increase the optimal objective value and $\Iv \subseteq \N-s$, we have $\opt_\mathbb{A}(\N-s) \ge \opt_\mathbb{A}(\Iv) =\opt_\mathbb{A}(\N)$, which causes a contradiction.
\end{proof}

\begin{proof}[Proof of Lemma \ref{lem:removal_non_support}]
$k \notin \Sc$ and $s \in \Sc$ give $\opt(\N - k) = \opt(\N)$ and $\opt(\N - s) < \opt(\N)$, respectively. Assumption \ref{ass:monotone_algorithm} shows $\opt(\N - k - s) \le \opt(\N - s)$. Hence, it holds that
\begin{small}
\begin{multline*}
  \opt(\N - k - s) \le \opt(\N - s) < \opt(\N) = \opt(\N - k),\forall s \in \Sc(\N)
\end{multline*}
\end{small}
then $s$ is a support scenario for $\SP(\N - k)$, thus $\Sc(\N) \subseteq \Sc(\N - k)$.
\end{proof}

\begin{proof}[Proof of Lemma \ref{lem:inactive_are_not_support_scenario}]
\FR{
For the purpose of contradiction, we assume $i \in \mathcal{S} \cap \mathcal{P}$. Since $i$ is a support scenario, there is a new optimal solution $x_{\mathcal{N}-i}^*$ with $c^\intercal x_{\mathcal{N}-i}^* < c^\intercal x_{\mathcal{N}}^*$. Let $0 < \alpha < 1$ be a positive number, and we consider the solution $\hat{x} = \alpha x_{\mathcal{N}-i}^* + (1- \alpha) x_{\mathcal{N}}^*$. It is obvious that 
\begin{equation*}
c^\intercal \hat{x} < \alpha c^\intercal x_{\mathcal{N}}^* + (1- \alpha) c^\intercal x_{\mathcal{N}}^* = c^\intercal x_{\mathcal{N}}^*
\end{equation*} 
Next we show that $\hat{x}$ is feasible to $\SP(\N)$ if $\alpha$ is small enough. Since $i \in \mathcal{P}$, there exists a positive number $\delta$ such that $f(x_{\mathcal{N}}^*, \xi^i) \le - \delta$. Using the fact that $f(x,\xi)$ is convex in $x$ and convex functions are also Lipschitz (with Lipschitz constant $L$), we can derive an upper bound $\overline{\alpha}(L, \delta)$, such that for any $0 < \alpha \le \overline{\alpha}(L,\delta)$, we have $f(x_{\N}^*, \xi^i) \le - \delta \le f(\hat{x}, \xi^i) \le 0$.
For other constraints $f(x,\xi^j) \le 0$ ($j\ne i$), it is easy to see that $\hat{x}$ is also feasible because it is the convex combination of two feasible points. Therefore $\hat{x}$ is another feasible solution to $\SP(\N)$ but with smaller objective $c^\intercal \hat{x} < c^\intercal x_{\mathcal{N}}^*$. This causes contradiction.
}
\end{proof}

\begin{proof}[Proof of Theorem \ref{thm:support_scenario_multiplier}]
\FR{\emph{(Preparation)}
The Lagrange dual function $D_{\N}(\mu,\lambda)$ of $\SP(\N)$ is:
\begin{equation}
D_{\N}(\mu,\lambda) = \inf_{x} \Big( c^\intercal x + \sum_{\iota=1}^{N} (\mu^\iota)^\intercal f(x,\xi^\iota) + \lambda^\intercal g(x) \Big )
\end{equation}
The Lagrange dual problem is $\max_{\mu \ge 0,\lambda \ge 0} D_{\N}(\mu,\lambda)$, and we use $\lambda_{\N}^*$ and $\mu_{\N}^* = \{\mu_{\N}^{i,*}\}_{i=1}^{N}$ to denote its optimal solution.
By Assumption \ref{ass:feasibility_uniqueness} (feasibility), we know that $\SP(\N)$ has a strictly feasible solution, thus Slater's condition holds and $D(\mu_{\mathcal{N}}^*,\lambda_{\mathcal{N}}^*) = c^\intercal x_{\mathcal{N}}^*$ by strong duality. 
We then consider the Lagrange dual problem of $\SP(\N-j)$. Instead of directly removing constraints $f(x, \xi^j) \le 0$, we consider the following relaxed version
\begin{equation}
f(x, \xi^j) \le M \mathbf{1}.
\end{equation}
When constant $M$ is sufficiently large, it is as if constraints $f(x, \xi^j) \le 0$ are removed. The associated dual problem is 
\begin{equation}
D_{\N-j}(\mu,\lambda) = D_{\N}(\mu,\lambda) - M \mathbf{1}^\intercal \mu^{j}
\end{equation}
The optimal dual solution of $\SP(\N-j)$ is denoted by $\lambda_{\N-j}^*$ and $\mu_{\N-j}^* = \{\mu_{\N-j}^{i,*}\}_{i=1}^{N}$. Note that $\| \mu_{\N-j}^{j,*} \| = 0$, otherwise $D_{\N-j}(\mu,\lambda)$ will be dominated by $M \mathbf{1}^\intercal \mu^{j}$, which is unbounded when $M$ is arbitrarily large, which leads to contradiction.
}

\FR{
\emph{We first prove (1).} For the purpose of contradiction, we assume $j$ is a support scenario but with $\| \mu_{\N}^{j,*} \| = 0$. Since $j$ is a support scenario, from strong duality, we know that
\begin{equation*}
D_{\N}(\mu_{\N}^*, \lambda_{\N}^*) = c^\intercal x_{\N}^* > c^\intercal x_{\N-j}^* = D_{\N-j}(\mu_{\N-j}^*, \lambda_{\N-j}^*)
\end{equation*}
If $\| \mu_{\N}^{j,*} \| = 0$, then $(\lambda_{\N}^*,\mu_{\N}^*)$ is a feasible solution to $\max_{\mu \ge 0,\lambda \ge 0} D_{\N-j}(\mu,\lambda)$, therefore
\begin{multline*}
D_{\N-j}(\mu_{\N-j}^*, \lambda_{\N-j}^*) \ge D_{\N-j}(\mu_{\N}^*, \lambda_{\N}^*) \\
= D_{\N}(\mu_{\N}^*, \lambda_{\N}^*) - M \mathbf{1}^\intercal \mu_{\N}^{j,*} = D_{\N}(\mu_{\N}^*, \lambda_{\N}^*) \\ > D_{\N-j}(\mu_{\N-j}^*, \lambda_{\N-j}^*)
\end{multline*}
which is a contradiction.
}

\FR{
\emph{Next we prove (2)} by constructing an optimal dual solution with $\|\mu^{j,*}\| = 0$. If $\xi^j$ is \emph{not} a support scenario, then
\begin{equation}
D_{\N}(\mu_{\N}^*, \lambda_{\N}^*) = c^\intercal x_{\N}^* = c^\intercal x_{\N-j}^* = D_{\N-j}(\mu_{\N-j}^*, \lambda_{\N-j}^*)
\end{equation}
by Slater's condition and strong duality. We could assign $\hat{\mu}_{\N}^{\iota,*} = \mu_{\N-j}^{\iota,*}$ for $\iota \ne j$ and let $\hat{\mu}_{\N}^{j,*} = \mathbf{0}$. For the constructed solution $\hat{\mu}_{\N}^*$, we have
\begin{multline*}
D_{\N}(\hat{\mu}_{\N}^*, \lambda_{\N-j}^*) = D_{\N-j}(\mu_{\N-j}^*, \lambda_{\N-j}^*) = c^\intercal x_{\N}^*
\end{multline*}
Clearly this is one optimal dual solution to $\SP(\N)$.
}

\FR{
\emph{Next we prove (3).} First, we solve the dual problem of $\SP(\Sc)$ (with support scenarios only) and obtain optimal solution $(\lambda_{\Sc}^*,\mu_{\Sc}^*)$. Note that $\|\mu_{\mathcal{S}}^{\iota,*}\| > 0$ for any $\iota \in \mathcal{S}$ according to (1). Next, we assign $\mu_{\N}^{\iota,*} = \mu_{\Sc}^{\iota,*}$ for $\iota \in \Sc$, and assign $\mu_{\N}^{\iota,*} = \mathbf{0}$ for $\iota \notin \Sc$. Similar to the proof of (2), this constructed solution $(\lambda_{\Sc}^*,\mu_{\N}^*)$ is an optimal dual solution to $\SP(\N)$. With this constructed $\mu_{\N}^*$, we know that $\mathcal{M}(\mu_{\N}^*) \subseteq \Sc$. Combining with the claim $\Sc \subseteq \mathcal{M}(\mu_{\N}^*)$ in (1), we have  $\mathcal{M}(\mu_{\N}^*) = \Sc$.
}

\FR{(4) is obvious from (3).}
\end{proof}

\begin{proof}[Proof of Corollary \ref{cor:nondegenerate_uniqueness}]
We first prove (1), that is $\SP(\N)$ has a unique essential set if it is non-degenerate (similar with the proof of Lemma 2.11 in \cite{calafiore_random_2010})). From Lemma \ref{lem:support_is_subset_invariant_nonconvex}, an essential set can be written as $\Es = \Sc \cup \mathcal{Y}$ where $\mathcal{Y} \subseteq (\N - \Sc)$. The support set $\Sc$ is invariant because of the non-degeneracy of $\SP(\N)$ by assumption. Since $\Es$ is the invariant set of minimal cardinality, we can let $\mathcal{Y} = \emptyset$ and $\Sc$ is the essential set. The support set $\Sc$ is unique by definition, this implies the uniqueness of the essential set $\Es$ for non-degenerate $\SP(\N)$.

We then prove (2). Lemma \ref{lem:support_is_subset_invariant_nonconvex} shows that $\Sc \subseteq \Ir$, we only need to show $\Ir \subseteq \Sc$ when $\SP_\mathbb{A}(\N)$ is non-degenerate.
For the purpose of contradiction, we assume there exists $s \in \Ir$ but $s \notin \Sc$. By hypothesis ($s \notin \Sc$), we have $\Sc \subseteq \Ir - s$ (Lemma \ref{lem:support_is_subset_invariant_nonconvex}). The monotonicity assumption \ref{ass:monotone_algorithm} gives $\opt_\mathbb{A}(\Sc) \le \opt_\mathbb{A}(\Ir -s)$. Since $\Ir$ is irreducible, we have $\opt_\mathbb{A}(\Ir - s) < \opt_\mathbb{A}(\Ir)$. $\SP_\mathbb{A}(\N)$ is non-degenerate and $\Ir$ is invariant gives $\opt_\mathbb{A}(\Ir) = \opt_\mathbb{A}(\N) = \opt_\mathbb{A}(\Sc)$. Combining the results above, we have
\begin{equation*}
  \opt_\mathbb{A}(\Sc) \le \opt_\mathbb{A}(\Ir -s) < \opt_\mathbb{A}(\Ir) = \opt_\mathbb{A}(\N) = \opt_\mathbb{A}(\Sc),
\end{equation*}
which is clearly a contradiction. Therefore $\Sc = \Ir$.
\end{proof}

\begin{proof}[Proof of Theorem \ref{thm:nondegenerate_unique_essential_irreducible}]
(1) $\Rightarrow$ (2) is proved in Corollary \ref{cor:nondegenerate_uniqueness}. And (2) $\Rightarrow$ (3) is obvious, since the essential set $\Es$ is irreducible. If there is only one irreducible set, then it is the essential set.

Lastly, we prove (3) $\Rightarrow$ (1). We prove $\SP(\N)$ being degenerate implies the essential set is not unique (equivalent with the statement that $\SP(\N)$ is non-degenerate if it has a unique essential set). Suppose $\SP(\N)$ is degenerate, i.e. $\opt(\Sc) < \opt(\N)$. Consider an essential set $\Es = \Sc \cup \mathcal{T}$ (Lemma \ref{lem:support_is_subset_invariant_nonconvex}), where $\mathcal{T}$ is non-empty and $k \in \mathcal{T}$. Consider the scenario problem $\SP(\N - k)$, and $\opt(\N-k) = \opt(\N)$ because $k\notin \Sc$. We also know that $\Sc$ is contained in any essential set of $\SP(\N - k)$ by Lemma \ref{lem:removal_non_support}, i.e. $\Es(\N-k) = \Sc \cup \hat{\mathcal{T}}$. And $\hat{T}$ has to be non-empty \footnote{Otherwise $\opt(\Sc) = \opt(\Es(\N-k)) = \opt(\N-k)=\opt(\N)$, which contradicts with the hypothesis that $\SP(\N)$ is degenerate.}. Then $\opt(\Sc \cup \hat{\mathcal{T}}) = \opt(\N-k) = \opt(\N)$, therefore $\Sc \cup \hat{\mathcal{T}}$ must contain at least one essential set that is different from $\Sc \cup \mathcal{T}$ (because $k \in \mathcal{T}$ and $k \notin \hat{\mathcal{T}}$). Therefore $\SP(\N)$ has more than one essential set when it is degenerate.
\end{proof}



\begin{proof}[Proof of Lemma \ref{lem:two_stage_support}]
We first prove (1). The case that $\hat{\Sc} = \emptyset$ is trivial. For the case that $\hat{\Sc}$ contains at least one scenario $s \in \hat{\Sc}$. Solving the 2nd stage problem with $s$ removed gives a different optimal solution $\hat{x}$ with $c_x^\intercal \hat{x} < c_x^\intercal x^*$. Clearly $(\hat{x},y^*)$ is a feasible solution to $\SP(\N - s)$, with 
\begin{equation}
  c_y^\intercal y^* + c_x^\intercal \hat{x} < c_y^\intercal y^* + c_x^\intercal x^*
\end{equation}
therefore $s$ is a support scenario for $\SP(\N)$ and $\hat{\Sc} \subseteq \Sc$.

We then prove (2). By Assumption \ref{ass:monotone_algorithm}, we know that $\opt_\mathbb{A}(\hat{\Sc}) \le \opt_\mathbb{A}(\Sc) \le \opt_\mathbb{A}(\N)$ since $\hat{\Sc} \subseteq \Sc \subseteq \N$. If $\hat{\Sc}$ is invariant, i.e. $\opt_\mathbb{A}(\hat{\Sc}) = \opt_\mathbb{A}(\N)$, then $\opt_\mathbb{A}(\N) \le \opt_\mathbb{A}(\hat{\Sc}) \le \opt_\mathbb{A}(\Sc) \le \opt_\mathbb{A}(\N)$ gives $\opt_\mathbb{A}(\Sc) = \opt_\mathbb{A}(\N)$, therefore $\SP(\N)$ is non-degenerate.
\end{proof}

\vspace{-0.5cm}
\section{Algorithms} 
\label{sec:algorithms}

\vspace{-0.3cm}
\begin{algorithm}[H]
\begin{algorithmic}[1]
\STATE Compute $\opx_{\mathbb{A}}(\N)$ by solving $\SP_{\mathbb{A}}(\N)$. Set $\mathcal{I} \leftarrow \mathcal{N}$.
\FOR{$i \in \mathcal{N}$}
\STATE Compute $\opx_{\mathbb{A}}(\mathcal{I}-i)$ by solving $\SP(\mathcal{I}-i)$.
\IF{$\opt_{\mathbb{A}}(\mathcal{I}-i) = \opt_{\mathbb{A}}(\N)$} 
  \STATE $ \mathcal{I} \leftarrow \mathcal{I} - i$.
\ENDIF
\ENDFOR
\end{algorithmic}
\caption{Find an Irreducible Set $\mathcal{I}$ of $\SP_{\mathbb{A}}(\N)$}
\label{alg:find-irreducible-set}
\end{algorithm}

\vspace{-0.3cm}

\begin{algorithm}[H]
\begin{algorithmic}[1]
\STATE Compute $x_{\mathcal{N}}^*$ by solving $\SP(\N)$.
\STATE Set $\mathcal{S} \leftarrow \emptyset$.
\FOR{$i \in \mathcal{N}$}
\STATE Solve the scenario problem $\text{SP}_{\mathcal{N}-i}$ and compute $x_{\mathcal{N}-i}^*$.
\IF{$c^\intercal x_{\mathcal{N}-i}^* < c^\intercal x_{\mathcal{N}}^*$} 
  \STATE $ \mathcal{S} \leftarrow \mathcal{S} + i$.
\ENDIF
\ENDFOR
\end{algorithmic}
\caption{Find the Support Set $\mathcal{S}$ of $\SP(\N)$}
\label{alg:find-support-set}
\end{algorithm}



\vspace{-0.5cm}

\begin{algorithm}[H]
\begin{algorithmic}[1]
\STATE Solve $\SP_{\mathbb{A}}(\N)$ and compute the solution $(x^*, y^*)$.
\STATE Fix $y=y^*$, find support scenarios $\mathcal{S}$ of the second-stage problem \eqref{opt:second_stage_scenario_problem}, e.g. using Algorithm \ref{alg:find-support-scenario-dual}.
\IF{$\opt_{\mathbb{A}}(\mathcal{S}) = \opt_{\mathbb{A}}(\mathcal{N})$}
  \STATE $\SP_{\mathbb{A}}(\N)$ is non-degenerate and $\mathcal{S}$ is the essential set.
\ELSE
  \STATE 
  	$\SP_{\mathbb{A}}(\N)$ is degenerate, the best we can find is an irreducible set, e.g. using Algorithm \ref{alg:find-irreducible-set}.
\ENDIF
\end{algorithmic}
\caption{For the two-stage scenario problem \eqref{opt:two-stage-scenario-problem} }
\label{alg:find-support-subsample-structural}
\end{algorithm}

\section{Illustrative Examples} 
\label{sec:settings_of_the_3_bus_system}

\subsection{Illustrative Example in Figure \ref{fig:primal-example}} 
\label{sub:illustrative_example_in_figure_fig:primal-example}
\FR{We consider the scenario problem in \eqref{opt:scenario-problem-ex-primal}.
\begin{small}
\begin{subequations}
\label{opt:scenario-problem-ex-primal}
\begin{align}
\min_{x_1, x_2}~& x_2 \\
\text{s.t.}~& x_2 \ge x_1,\hspace{0.95cm} x_2 \ge -x_1 &~\text{(scenario 1)} \\
& x_2 \ge 2x_1 - 1, ~x_2 \ge 3 - 2x_1 &~\text{(scenario 2)} \\
& x_1 \ge 0,~x_2 \ge 0
\end{align}
\end{subequations}
\end{small}
The constraint $f(x,\xi) \le 0$ takes the following form with two scenarios: $\xi^1= 1$ and $\xi^2= 2$.
\begin{small}
\begin{align*}
f_1(x,\xi) =  \xi x_1 - x_2 - (\xi-1) \le 0 & \\
f_2(x,\xi) = -\xi x_1 - x_2 + 3(\xi-1) \le 0 &
\end{align*}
\end{small}
The dual of \eqref{opt:scenario-problem-ex-primal} is
\begin{small}
\begin{subequations}
\begin{align}
\max_{\mu_{11},\mu_{12},\mu_{21},\mu_{22}}~& -\mu_{21} + 3 \mu_{22} \\
\text{s.t.}~& \mu_{11} - \mu_{12} - 2 \mu_{21} + 2 \mu_{22} \le 0 \\
& \mu_{11} + \mu_{12} + \mu_{21} + \mu_{22} \le 1 \\
& \mu_{11} \ge 0,~\mu_{12} \ge 0,~\mu_{21} \ge 0,~\mu_{22} \ge 0.
\end{align}
\end{subequations}
\end{small}
The optimal dual solution is $\mu_{11} = 0, \mu_{12} \ge 0$, $\mu_{21} = 0.5-0.75 \mu_{12}$ and $\mu_{22} = 0.5-0.25 \mu_{12}$. Note $\mu_{12}$ can take any positive value so the dual solution is not unique.
}

\subsection{Settings of the 3-bus System in Section \ref{sub:s_scuc_is_degenerate}} 
\label{sub:settings_of_the_3_bus_system}
All settings of the 3-bus system can found in Table \ref{tab:case3_settings}.
\begin{table*}[tb]
  \caption{Settings of the 3-bus System}
  \label{tab:case3_settings}
  \centering

  \begin{tabular}{ccccc|ccccc}
  \hline

  \hline
  \multicolumn{5}{c|}{Line Data} & \multicolumn{5}{c}{Generator Data} \\
  \hline
  Line No. & From Bus & To bus  & Reactance (p.u.) & Capacity (MW) & Gen No. & Bus & Min & Max & Marginal Cost \\
  \hline
  1 & 1 & 2 & 1.0 & 20  & 1 & 1 & 20 & 100 & 1 \\
  2 & 1 & 3 & 1.0 & 100 & 2 & 2 & 20 & 90 & 100 \\
    \cline{6-10}
  3 & 2 & 3 & 1.0 & 100 \\
  \hline

  \hline
  \multicolumn{5}{c|}{Load Data (MW)} & \multicolumn{5}{c}{Wind Data (MW)} \\
  \hline
  Bus & Forecast  & Error 1 & Error 2 & Error 3 & Bus & Forecast & Error 1 & Error 2 & Error 3  \\
  \hline
  3 & 110 & 11 & -30 & -35 & 2 & 30 & 6 & -15 & -25 \\
  \hline

  \hline
  \end{tabular}
\end{table*}

\bibliographystyle{IEEEtran}
\bibliography{myreferences}
\end{document}